\journalname{ }
\title{The transition distribution of a sample from a Wright-Fisher diffusion with general small mutation rates}
\titlerunning{Wright-Fisher diffusion with small general mutation rates}
\author{Conrad J.\ Burden \and Robert C. Griffiths}
\institute{Conrad J.\ Burden, Mathematical Sciences Institute, Australian National University, Canberra, Australia
\at \email{conrad.burden@anu.edu.au}
\and
Robert C.\ Griffiths, School of Mathematics, Monash University, Australia; Corresponding author
\at
\email{Bob.Griffiths@Monash.edu}
}
\date{Received: date / Accepted: date}
\begin{document}
\maketitle
\begin{abstract}
The transition distribution of a sample taken from a Wright-Fisher diffusion with general small mutation rates is found using a coalescent approach. The approximation  is equivalent to having at most one mutation in the coalescent tree of the sample up to the most recent common ancestor with additional mutations occurring on the lineage from the most recent common ancestor to the time origin if complete coalescence occurs before the origin. The sampling distribution leads to an approximation for the transition density in the diffusion with small mutation rates.
This new solution has interest because the transition density in a Wright-Fisher diffusion with general mutation rates is not known. 

\keywords {coalescent tree \and small mutation rates  \and Wright-Fisher diffusion}
\subclass{92B99 \and 92D15}
\end{abstract}

%
%
\section{Introduction}
\label{sec:Introduction}

\citet{BT2016,BT2017} find an approximation for the stationary distribution in a $d$-allele neutral Wright-Fisher diffusion with general small mutation rates.
The model has a connection to boundary processes in which only transitions involving non-segregating and bi-allelic states are considered.
\citet{SH2017},\citet{VB2015,DSK2015} study this boundary mutation property. 

The Kingman coalescent process is dual to the Wright-Fisher diffusion in describing the ancestral history of a sample of $n$ individuals in a population back in time. 
In \citet{BG2019} a coalescent approach with mutations in the tree is used to find an approximate sampling formula for small mutation rates which then leads to an approximation for the stationary distribution in the diffusion.
They show that finding approximate formulae for small rates is equivalent to considering at most one mutation in a coalescent tree. 

In the current paper we find the sampling distribution in the Wright-Fisher diffusion with general mutation rates at time $t\geq 0$ with small rates by using a coalescent approach. 
The probability distribution of a gene tree at time $t$  is found in a similar way. A gene tree is equivalent to a sample of sequences in the infinitely-many-sites model.

 The details of the sampling distribution in the general mutation model are more complex than in a stationary model because the coalescent tree back in time is constrained by the origin. The sampling distribution leads to an approximation for the transition density in the general mutation model which is of interest as a solution to the diffusion process, in population genetics, and more widely as an approximate differential equation solution. 
\citet{V2014}, Theorem~3, considers a spectral expansion of the transition function in a two-allele model when mutation rates are small.
The eigenfunctions are known in a two-allele model, but are unknown if there are more than two alleles, so this approach is not used here.

\citet{BG2018b} also find an approximation to the stationary density in a two island, two allele model when mutation and migration rates are small by using a flux argument in the Wright-Fisher diffusion process as well as a coalescent argument.

%
%
\section{Preliminaries}
\label{sec:Preliminaries}

A $d$-dimensional Wright-Fisher diffusion process $\{\bm{X}(t)\}_{t\geq 0}$ with general mutation rates has a generator
\begin{equation}
\mathcal{L}=\frac{1}{2}\sum_{i,j=1}^dx_i(\delta_{ij}-x_j)\frac{\partial^2}{\partial x_i\partial x_j}+\sum_{i,j=1}^d\gamma_{ji}x_j\frac{\partial}{\partial x_i},
\label{gen:0}
\end{equation}
where $\gamma=(\gamma_{ij}) = \frac{\theta}{2}(P-I)$,  where $\theta/2$ is the overall mutation rate away from each type and $P$ a $d\times d$ transition probability matrix for type changes which has non-negative entries and row sums $1$.
There are different possible pairs $(\theta,P)$ corresponding to a fixed rate matrix $\gamma$ with non-negative off-diagonal entries and row sums $0$.
  Note that for a choice $(\theta,P)$ with $\gamma$ fixed
$P=\frac{2}{\theta}\gamma + I$, so  $\theta$ has to be chosen so the off-diagonal entries of $P$ are non-negative. That is, choose $\theta$ large enough so that $\frac{2}{\theta}\max_{1 \leq i \leq d}(-\gamma_{ii}) \leq 1$. The minimal value of $\theta$ in choosing a pair $(\theta,P)$ for a given $\gamma$ is $\theta = 2 \max_{1\leq i \leq d}(-\gamma_{ii})$. This may define an appropriate pair $(\theta,P)$ for us to use in this paper because we are considering small mutation rates to order $\theta$. It is possible to use larger values of $\theta$, for example 
$\theta=\sum_{i\ne j}\gamma_{ij} = \sum_{i=1}^d(-\gamma_{ii})$ which is like a matrix norm. There may be other biological reasons to consider a particular $(\theta,P)$ pair and regard $\gamma$ as being determined by that pair.
Explicit analytic solutions for the stationary distribution, \citet{W1969}, and the transition density in the diffusion, \citet{G1979,EG1993,GS2010}, are only known for parent independent mutation when $P$ has identical rows, in which case the stationary density is a  Dirichlet distribution.  

The coalescent duality with the Wright-Fisher diffusion is well known in population genetics. Briefly the argument is the following. Denote the multinomial distribution by 
\begin{equation}
m(\bm{n},n,\bm{p}) = {n \choose \bm{n}}\prod_{j=1}^dp_j^{n_j},	\label{multinomial}
\end{equation}
where $\bm{n} = (n_1, \ldots, n_d)$, $\sum_{j = 1}^d n_j = n$.  Let 
\[
p(\bm{n};t,\bm{x}) = \mathbb{E}\big [m(\bm{n};n,\bm{X}(t))\mid \bm{X}(0)=\bm{x}\big]
\]
 be the probability distribution of a configuration $\bm{n}$ in a sample of $n$ individuals taken at time $t$ when the initial frequencies in the diffusion are $\bm{x}$. 
Note that
\begin{eqnarray}
&&\mathcal{L}x_1^{n_1}\cdots x_d^{n_d} =
\nonumber \\
&& \frac{1}{2}\sum_{i=1}^dn_i(n_i-1)x_1^{n_1}\cdots x_i^{n_i-1}\cdots x_d^{n_d}
-\frac{1}{2}\sum_{i,j=1}^dn_i(n_j-\delta_{ij})x_1^{n_1}\cdots x_d^{n_d}
\nonumber\\
\nonumber\\
&&+\frac{\theta}{2}\sum_{i,j=1}^dP_{ji}n_i x_1^{n_1}\cdots x_i^{n_i-1+\delta_{ij}}\cdots x_j^{n_j+1-\delta_{ij}}\cdots  x_d^{n_d}  
-\frac{\theta}{2}\sum_{i=1}^dn_ix_1^{n_1}\cdots x_d^{n_d}.
\nonumber \\
\label{stat:1}
\end{eqnarray}
Simplifying Eq.~(\ref{stat:1}) and using basic theory 
\begin{eqnarray}
\frac{d}{dt}p(\bm{n};t,\bm{x}) &=& -\frac{1}{2}n(n-1+\theta)
p(\bm{n};t,\bm{x}) + \frac{1}{2}n(n-1)\sum_{i=1}^d\frac{n_i-1}{n-1}p(\bm{n}-\bm{e}_i;t,\bm{x})
\nonumber \\
&&
+ \frac{1}{2}n\theta\sum_{i,j=1}^d\frac{n_j+1-\delta_{ij}}{n}P_{ji}p(\bm{n}-\bm{e}_i + \bm{e}_j;t,\bm{x}).
\label{stat:2}
\end{eqnarray}
Arguing probabilistically Eqs.~(\ref{stat:2}) for $p(\bm{n};t,\bm{x})$ are the same as those for the sampling distribution in the coalescent with general mutations governed by $\theta$ and $P$ on the edges. The overall coalescence rate when $n$ individuals is $\frac{1}{2}n(n-1)$ and the overall mutation rate is $\frac{1}{2}n\theta$. If the first event back in time was a  coalescence event  then from the $n-1$ individuals after coalescence the configuration must be $\bm{n}-\bm{e}_i$ and a type $i$ individual chosen to give birth with probability $(n_i-1)/(n-1)$ to obtain a configuration of $\bm{n}$ for $i \in [d]$. If the first event back in time was a mutation, then from the $n$ individuals after the event to obtain a configuration of $\bm{n}$ a type $j$ individual mutates to a type $i$ individual from a configuration $\bm{n}-\bm{e}_i + \bm{e}_j$ with probability 
$(n_j+1-\delta_{ij})/n$ for $i,j \in [d]$.

\citet{BG2019} find that for small mutation rates, $\theta \rightarrow 0$, the probability of a configuration of $\bm{n}$ in a sample of $n$ individuals taken in the stationary 
distribution of the diffusion with generator in Eq.~(\ref{gen:0}),  $p(\bm{n};\theta) = \lim_{t \rightarrow \infty} p(\bm{n};t,\bm{x})$, is
\begin{eqnarray}
p(n\bm{e}_a;\theta) &=& \pi_a\Big (1 - \theta(1-P_{aa})\sum_{l=1}^{n-1}\frac{1}{l}\Big ) + \mathcal{O}(\theta^2)
\nonumber \\
p(n_a\bm{e}_a+n_b\bm{e}_b;\theta) &=& \theta\Big (\pi_aP_{ab}\frac{1}{n_b} + \pi_bP_{ba}\frac{1}{n_a}\Big ) + \mathcal{O}(\theta^2)
\nonumber \\
p(\bm{n};\theta) &=& \mathcal{O}(\theta^2)\text{~if~}\bm{n}\text{~has~}>2\text{~non-zero~entries},
\label{thm:c}
\end{eqnarray}
where $\bm{\pi}$ is the stationary distribution of $P$.  

The main result of this paper, Theorem~\ref{theorem:1} in Section~\ref{sec:SamplingArbitraryN}, extends the sampling formula to finite times $0 \le t < \infty$.  
Suppose a sample of genes taken at time $t$ is composed of families of genes from $l\geq 1$ founder lineages at time 0.  Fig. \ref{fig:lineages} illustrates 
two possibilities contributing to the probability $p(\bm{n};t,\bm{x})$, which need to be considered.  If there are $l > 1$ founder lineages, then to $\mathcal{O}(\theta)$ there 
is at most one mutation event in the coalescent tree, and hence at most one mutant family arising from a mutation.  However, if there is $l = 1$ founder lineage, 
then it turns out that to $\mathcal{O}(\theta)$ there is at most one mutation event in that part of the coalescent tree between the most recent common ancestor 
of the sample (MRCA) and time $t$, but any number of coalescent events along the single founder edge from time zero to the MRCA.  The reason for this 
is related to the asymptotic transition to the stationary distribution Eq.~(\ref{thm:c}) as $t \rightarrow \infty$, as will become 
apparent in the next section.  
 The proof of Theorem~\ref{theorem:1} uses this description and works through the combinatorial details of the type and size of families in a sample. Unlike in the stationary distribution there can be more than two types present when there are  greater than two families of founder lineages.
 \bigskip
 
\begin{figure}
\begin{center}
\caption{Families from founder lineages}\label{fig:lineages}
\bigskip

\begin{tikzpicture}[xscale=0.4,yscale = 0.4]
\draw (0,0) -- (0,3);
\draw (0.75,3) -- (0.75,8);
\draw (1,0) -- (1,2);
\draw (1.5,2) -- (1.5,3);
\draw (2,0) -- (2,2);
\draw (3,0) -- (3,6);
\draw (4,0) -- (4,1);
\draw (5,0) -- (5,1);
\draw (4.5,1) -- (4.5,4);
\draw  (6,0) -- (6,2.5);
\draw  (7,0) -- (7,2.5);
\draw  (6.5,2.5) -- (6.5,4);
\draw  (5.5,4) -- (5.5,6);
\draw  (4.25,6) -- (4.25,8);
\draw  (10,0) -- (10,3.5);
\draw  (11,0) -- (11,1.5);
\draw  (12,0) -- (12,1.5);
\draw  (11.5,1.5) -- (11.5,3.5);
\draw  (10.75,3.5) -- (10.75,5);
\draw   (11.875,5) -- (11.875,8);
\draw  (13,0) -- (13,5);
\draw (0,3) -- (1,3);
\draw (1,2) -- (2,2);
\draw (0,3) -- (1.5,3);
\draw (4,1) -- (5,1);
\draw (6,2.5) -- (7,2.5);
\draw  (4.5,4) -- (6.5,4);
\draw  (3,6) -- (5.5,6);
\draw  (10,3.5) -- (11.5,3.5);
\draw  (11,1.5) -- (12,1.5);
\draw (10.75,5) -- (13,5);
\draw (0.75,8) -- (11.875,8);
\draw -- (8.5,4.5) node {$\cdots$};
\draw -- (5.5,5.25) node {$\times$};
\draw -- (6.5,9) node {$l>1$ founder lineages};
\draw -- (0.75,8) node {$\bullet$};
\draw -- (4.25,8) node {$\bullet$};
\draw -- (11.875,8) node {$\bullet$};
\draw -- (-1.5,8) node {time $0$};
\draw -- (-1.5,0) node {time $t$};
\end{tikzpicture}
\hfil
\begin{tikzpicture}[xscale=0.4,yscale = 0.4]
\draw (0,0) -- (0,2);
\draw (1,0) -- (1,2);
\draw (0.5,2)-- (0.5,3.5);
\draw (2,0) -- (2,3.5);
\draw (1.25,3.5) -- (1.25,6);
\draw (2.5,6) -- (2.5,8);
\draw (3,0) -- (3,4);
\draw (4,0) -- (4,3);
\draw (5,0) -- (5,3);
\draw (4.5,3) -- (4.5,4);
\draw (3.75,4) -- (3.75,6);
\draw (0,2) -- (1,2);
\draw (0.5,3.5) -- (2,3.5);
\draw (1.25,6) -- (3.75,6);
\draw (3,4) -- (4.5,4);
\draw (4,3) -- (5,3);
\draw -- (2.5,8) node {$\bullet$};
\draw -- (2.5,6.5) node {$\times$};
\draw -- (2.5,7.5) node {$\times$};
\draw -- (2.5,7) node {$\times$};
\draw -- (0.5,2.5) node {$\times$};
\draw -- (-1.5,8) node {time $0$};
\draw -- (-1.5,0) node {time $t$};
\end{tikzpicture}
\end{center}
\bigskip

The first illustration shows how genes in a sample are partitioned by families from $l>1$ founder lineages. To $\mathcal{O}(\theta)$ in probability there can be at most one mutation along sample lineages, indicated by $\times$.

 The second illustration shows the case when there is just one founder lineage of the genes in the sample. Similarly to $\mathcal{O}(\theta)$ in probability there can be at most one mutation in the sample lineages between the MRCA and time $t$, but any number of mutations along the single lineage from the origin to the MRCA.

\end{figure}
\bigskip
 
 The number of edges in a coalescent tree of $n$ individuals is a pure death process beginning at $n$ with death rates
 $\mu_k = {k\choose 2}$, $k=n,\ldots, 2$.  The times between events are exponential random variables $T_k$ with rates $\mu_k$, 
 $k =n,\ldots ,2$. The number of edges in a coalescent tree at time $t$ back from the present time, $A_n(t)$, has a probability distribution
\begin{equation}
q_{nl}(t) := P(A_n(t)=l) = \sum_{j=l}^n\rho_j(t)(-1)^{j-l}
\frac{
(2j-1)(j+l-2)!
}
{
l!(l-1)!(j-l)!
}
\frac{n_{[j]}}{n_{(j)}},
\label{lod:0}
\end{equation}
where the notation is that 
\begin{equation*}
\begin{split}
n_{(j)} = n(n+1)\cdots (n+j-1), \\
n_{[j]} = n(n-1)\cdots (n-j+1),
\end{split}
\end{equation*}
and $\rho_j(t) = \exp \{-{j\choose 2}t\}$, \citep{T1984,G1980}. Denote $q_{nl}(t) = \mathbb{P}(A_n(t)=l)$ and the probability density of $T_n+\cdots +T_l$ as $f_{nl}(t)$.  Then 
\begin{eqnarray*}
f_{nl}(t) \,dt & = & \mathbb{P}\left( T_n + \cdots + T_l \in (t, t + dt) \right) \\
	& = & \mathbb{P}\left(A_n(t) = l\right) \times \text{Prob.\ of coalescence from $l$ to $l - 1$ in $(t, t + dt)$} \\
	& = & q_{nl}(t) \times {l \choose 2} dt, 
\end{eqnarray*} 
from which follows the identity
\begin{equation}
q_{nl}(t) = {l\choose 2}^{-1}f_{nl}(t).
\label{identity:100}
\end{equation}
As $n \rightarrow \infty$, $q_{\infty l}(t)$ and $f_{\infty l}(t)$ are proper distributions where $n_{[j]}/n_{(j)}$ is replaced by 1 in Eq.~(\ref{lod:0}).

Combinatorics of the edge configuration in a coalescent tree are related to a Polya urn model. The probability $r_{\bm{n}\mid \bm{l}}$ of a configuration of edges $\bm{l}$ 
of $d$ types giving rise to a configuration $\bm{n}$ ($n > l$) is equivalent to considering a Polya urn beginning with $\bm{l} > \bm{0}$ coloured balls in which $\bm{n}-\bm{l}$ 
draws of the $d$ balls are made. If a ball of colour $j\in [d]$ is chosen on a draw from the urn, then it is replaced together with an additional ball of colour $j$.
From classical theory, for which see \citet{GT2003},
\begin{eqnarray}
r_{\bm{n}\mid\bm{l}} 
&=& {n-l\choose \bm{n}-\bm{l}}\cdot 
\frac{
\prod_{a\in [d]} (l_a)_{(n_a-l_a)}
}
{l_{(n-l)}}
\nonumber \\
&= &{n-l\choose \bm{n}-\bm{l}}\cdot\frac{(l-1)!}{\prod_{a=1}^d (l_a-1)!}
\cdot \frac{\prod_{a=1}^d(n_a-1)!}{(n-1)!}
\nonumber \\
&=& {\cal DM}_{\bm{l},n-l}(\bm{n}-\bm{l}).	
\label{reqn:0}
\end{eqnarray}
Here ${\cal DM}$ is the Dirichlet-multinomial distribution, 
\begin{eqnarray*}
{\cal DM}_{\bm{\theta},n}(\bm{n}) &=& \int_{\Delta_d}{n \choose \bm{n}}x_1^{n_1}\cdots x_d^{n_d}{\cal D}_{\bm{\theta}}(d\bm{x})
\nonumber \\
&=& 
{n \choose \bm{n}} 
\frac
{
{\theta_1}_{(n_1)}\cdots {\theta_d}_{(n_d)}
}
{
\theta_{(n)}
},
\end{eqnarray*}
defined in terms of the Dirichlet distribution,  
\[
{\cal D}_{\bm{\theta}}(\bm{x}) = \frac{\Gamma (\theta)}{\Gamma(\theta_1)\cdots\Gamma (\theta_d)}x_1^{\theta_1 -1}\cdots x_d^{\theta_d-1},
\]
where 
$
\bm{x} \in \Delta_d = \{ 0 < x_i < 1, i \in [d], |\bm{x}|=1\}
$.
That is, ${\cal DM}_{\bm{\theta},n}(\bm{n})$ is probability of drawing a configuration $\bm{n}$ in a sample of $n$ individuals apportioned with 
random probabilities $\bm{X} \sim \text{Dirichlet}(\bm{\theta})$. 
 
It is useful to extend the definition of $r_{\bm{n}\mid \bm{l}}$ to allow some entries in $\bm{l}$ to be zero. Then $n_i=0$ if $l_i=0$ and
\begin{equation}
r_{\bm{n}\mid\bm{l}} = {n-l\choose\bm{n}-\bm{l}}\cdot
\frac{(l-1)!}{\prod_{l_a>0}(l_a-1)!}\cdot
\frac{\prod_{l_a>0}(n_a-1)!}{(n-1)!}.
\label{extend:0}
\end{equation}
In Theorem~\ref{theorem:1} the notation $\bm{n}\backslash ij$, $\bm{l}\backslash ij$ means the vectors with their $i$ and $j$th entries deleted. $r_{\bm{n}\backslash ij\mid \bm{l}\backslash i j}$ is equal to Eq.~(\ref{extend:0}) with $l_i=0$, $l_j=0$ and $n\backslash ij \equiv n-n_i-n_j$. A calculation shows that $r_{\bm{n}\backslash ij\mid \bm{l}\backslash i j}$ is equal to the conditional probability
$r_{\bm{n}\mid\bm{l}}/\mathbb{P}(n_i,n_j\mid l_i,l_j)$
where the marginal distribution of $n_i,n_j$ is
\begin{eqnarray}
\mathbb{P}(n_i,n_j\mid \bm{l}) &=&
\mathbb{P}(n_i,n_j\mid l_i,l_j) 
\nonumber \\
&=& \sum_{n_a, a \in [d]\backslash \{i, j\}}
r_{\bm{n}\mid \bm{l}}
\nonumber \\
&=& {\cal DM}_{(l-l_i-l_j,l_i,l_j),n-l}(n\backslash ij - l\backslash ij,n_i-l_i,n_j-l_j).
\label{configuration:xx}
\end{eqnarray}
An easy way to see this is to consider a Polya urn in which all colours other than $i$ and $j$ are considered identical.  
The probability that a particular edge, while $k$ edges in a coalescent tree, subtends $c$ edges in a sample of $n$ is a special case of Eq.~(\ref{extend:0})
\begin{eqnarray}
p_{nk}(c) &=& r_{(c,n-c)\mid(1,k-1)}
\nonumber \\
&= &\frac{{n-c-1\choose k-2}}{{n-1\choose k-1}},\>1 \leq c \leq n-k+1.
\label{pnkc:0}
\end{eqnarray} 
See also \citet{GT1988}.

%
%

\section{Sampling Distribution for Sampling Size $n = 2$}
\label{sec:SamplingNEquals2}

 To help understand Theorem~\ref{theorem:1}, we first show how to calculate the probability of possible configurations when $n=2$. 
Looking back in time from $t$ to time zero there are $l=2$ founder lineages with probability $e^{-t}$ or $l=1$ founder lineages when the two initial lineages 
coalesce before $t$ with probability $1-e^{-t}$.  

Consider first the contribution to the probability of a sample configuration $n_a=2$, at time $t$, when there are $l = 2$ founder lineages.  If we assume that $t = o(1/\theta)$, indicated by $t << 1/\theta$, then the probability that $l=2$ and that greater or equal to two mutations occur is $e^{-t}\big (1 - e^{-\theta t} - \theta t e^{-\theta t}\big) = \mathcal{O}(\theta^2)$. 
For larger times, $t = \mathcal{O}(1/\theta)$, the probability
$e^{-t}$ is $o(\theta)$, since $\lim_{\theta \to 0} e^{-1/\theta}/\theta = 0$, and the $l = 2$ contribution can safely be ignored to $\mathcal{O}(\theta)$.
%
%
%
%
%
The probability that $l=2$ and $n_a=2$ is therefore
\begin{eqnarray}
&&x_a^2e^{-\theta t}e^{-t} + \big (2x_a^2P_{aa} + \sum_{j\ne a}2x_ax_jP_{ja}\big )	\frac{\theta}{2}te^{-\theta t}e^{-t} + o(\theta)
\nonumber \\
&&~~= x_a^2(1-\theta t)e^{-t} + \big (\sum_{j\in [d]}2x_ax_jP_{ja}\big )\frac{\theta}{2}te^{-t} + o(\theta).
\label{twol:2}
\end{eqnarray}
In (\ref{twol:2}) there are two founder lineages at time $t$ with probability $e^{-t}$; the first term corresponds to the probability that both founder lineages are of type $a$ and no mutation occurs in $(0,t)$; the second term corresponds to either both lineages being of type $a$ and a mutation occurring from $a$ to $a$, not changing the type of the lineage, or one founder lineage is of type $a$ and the other is of type $j\ne a$ and a mutation occurs changing the lineage from type $j$ to type $a$. The probability of one mutation in $(0,t)$ on a single edge is $\frac{\theta}{2}te^{-\theta t} + o(\theta)$.

If $l=1$, the probability $1-e^{-t}$ that the two lineages coalesce is not small for large $t$, and care must be taken to include multiple mutations in the coalescent tree.  
Suppose coalescence occurs at time $w<t$ back. The two edges of the coalescent tree from the coalescent event at $t-w$ to the current time $t$ contribute a 
factor $e^{-w}$, which is $\mathcal{O}(1)$ for $w << 1/\theta$ and $o(\theta)$ for $w = \mathcal{O}(1/\theta)$.  Thus it is sufficient to consider at most one mutation 
in this part of the tree.  On the other hand, there is no restriction on the number of mutations on the single edge from time 0 to the coalescent event if the 
calculation is to be accurate to $\mathcal{O}(\theta)$ for large $t-w$, in particular, for $(t-w)\theta = \mathcal{O}(1)$.  
Writing the mutation rate matrix as $\gamma = \frac{\theta}{2}(P - I)$, 
the probability that $l=1$ and $n_a=2$ for all $t > 0$ is therefore
\begin{eqnarray}
\lefteqn{ \sum_{j \in [d]} x_j \int_0^t \left(e^{(t - w)\gamma} \right)_{ja} e^{-\theta w} (1 + P_{aa}\theta w) e^{-w} dw + o(\theta) } \nonumber \\
& = & \sum_{i \in [d]} \left(\bm{x} e^{t\gamma} \right)_i \int_0^t \left(e^{-w \gamma} \right)_{ia} e^{-\theta w} (1 + P_{aa}\theta w) e^{-w} dw + o(\theta) \nonumber \\
& = & \sum_{i \in [d]} \left(\bm{x} e^{t\gamma} \right)_i \int_0^t \left(\delta_{ia} - \tfrac{1}{2} \theta w (P_{ia} - \delta_{ia}) \right) (1 - \theta w)(1 + P_{aa} \theta w) e^{-w} dw + o(\theta) 
                                        \nonumber \\
& = &  \left(\bm{x} e^{t\gamma} \right)_a \int_0^t  e^{-w} dw + \nonumber \\
&&      \qquad  \theta \sum_{i \in [d]} \left(\bm{x} e^{t\gamma} \right)_i \left\{ \delta_{ia} (P_{aa} - 1) - \tfrac{1}{2}(P_{ia} - \delta_{ia}) \right\} \int_0^t  we^{-w} dw  + o(\theta)     \nonumber \\
& = &  \left(\bm{x} e^{t\gamma} \right)_a (1 - e^{-t}) + \nonumber \\
&&     \qquad                \theta  \left\{\left(\bm{x} e^{t\gamma} \right)_a (P_{aa} - 1) - \tfrac{1}{2} \left(\bm{x} e^{t\gamma} (P - I) \right)_a \right\} \left( 1 - (1 + t)e^{-t} \right) + o(\theta).  \nonumber \\ 
\label{twol:1}
\end{eqnarray}
Probabilistically equation (\ref{twol:1}) is easiest to read from the first line, rather than from the simplified last line. In the first line the single founder is chosen to be type $j$ with probability $x_j$; the MRCA of the sample lineages is type $a$ from mutations along the single lineage; there could be no mutation along the two sample lineages, or one mutation from type $a$ to type $a$, not changing the type of the lineage. The integral accounts for the probability of coalescence of the two sample lineages at $w\in (0,t)$.

The total probability of a configuration $n_a=2$ is the sum of Eqs.~(\ref{twol:2}) and (\ref{twol:1}).  The stationary distribution is recovered by 
noting that $\bm{x} e^{t\gamma} \rightarrow \bm{\pi}$ as $t \rightarrow \infty$, giving 
$$
p(2 \bm{e}_a; \theta) = \pi_a (1 - \theta (1 - P_{aa}) ) + o(\theta).  
$$
If $t << 1/\theta$, then $e^{t\gamma} = I  + \tfrac{1}{2} \theta t (P - I) + o(\theta)$, and Eqs.~(\ref{twol:2}) and  (\ref{twol:1}) give  
\begin{eqnarray} 
\lefteqn{p(2 \bm{e}_a; t, \bm{x}) = x_a^2(1-\theta t)e^{-t} + \left(\sum_{j\in [d]}2x_ax_jP_{ja}\right)\frac{\theta}{2}te^{-t} +\ x_a(1 - e^{-t})\ +} \nonumber \\
	&&  \theta\left\{\tfrac{1}{2}(t - 1 + e^{-t}) \left(\sum_{j\in [d]} x_iP_{ia} - x_a \right) + (1 - (1 + t)e^{-t}) x_a (P_{aa} - 1) \right\} + o(\theta).  \nonumber \\
\label{twol:smalltheta}
\end{eqnarray}

The other possible configuration of two types is $n_a=1$, $n_b=1$, $a\ne b$.  If there are $l=2$ founder lineages, then by an argument similar to that for the $n_a=2$ case  
it is sufficient to consider at most one mutation in the tree.  Then either the two founder lineages are of types $a$ and $b$ and there is no mutation along the edges; 
or one of the founder lineages is of type $a$ ($b$) one is of type $j$ and a mutation occurs from  $j$ to $b$ ($a$) along the edge. The probability that $l=2$ and $n_a=1$, $n_b=1$ is therefore
\begin{eqnarray}
&&2x_ax_be^{-\theta t}e^{-t} + x_a^2\frac{\theta}{2}te^{-\theta t}2P_{ab}e^{-t} +\sum_{j\ne a}2x_ax_j\frac{\theta}{2}tP_{jb}e^{-\theta t}e^{-t} 
\nonumber \\
&&~~+x_b^2\frac{\theta}{2}te^{-\theta t}2P_{ba}e^{-t} +\sum_{j\ne b}2x_bx_j\frac{\theta}{2}tP_{ja}e^{-\theta t}e^{-t}  + o(\theta)
\nonumber \\
&&~=
2x_ax_b(1-\theta t)e^{-t}+\theta te^{-t} \sum_{j\in [d]} \left(x_ax_jP_{jb} + x_bx_jP_{ja} \right)+ o(\theta).
\label{xtwol:2}
\end{eqnarray}
If there is $l=1$ founder lineage then by an argument similar to that for the $n_a=2$ case it is necessary to consider any number of mutations on the single 
edge of the coalescent tree from time 0 to the coalescence event, and sufficient to consider at most one mutation event on the two edges from the coalescence event 
to the present time $t$.  The probability that $l=1$ and $n_a=1$, $n_b=1$ is therefore
\begin{eqnarray}
&& \sum_{j \in [d]} x_j \int_0^t \left(e^{(t - w)\gamma} \right)_{ja} e^{-\theta w} P_{ab} \theta w e^{-w} dw +  a \leftrightarrow b  + o(\theta) \nonumber \\
& = & \theta \sum_{i \in [d]} \left( \bm{x} e^{t\gamma} \right) _i  P_{ab} \int_0^t \left(e^{-w\gamma}\right)_{ia} w e^{-w} dw +  a \leftrightarrow b  + o(\theta) \nonumber \\ 
& = & \theta \left\{ \left( \bm{x} e^{t\gamma} \right) _a P_{ab} + \left( \bm{x} e^{t\gamma} \right) _b P_{ba} \right\} \left(1 - (1 + t)e^{-t} \right)  + o(\theta),  
\label{xtwol:1}
\end{eqnarray}
where $ a \leftrightarrow b $ indicates the previous term with $a$ and $b$ interchanged.  
The total probability of a configuration $n_a=1$, $n_b=1$ is the sum of Eqs.~(\ref{xtwol:2}) and  (\ref{xtwol:1}).  
Taking the limit $t \rightarrow \infty$ we recover the 
stationary distribution 
$$
p(\bm{e}_a + \bm{e}_b; \theta) = \theta(\pi_aP_{ab} + \pi_b P_{ba}) + o(\theta).  
$$
If $t << 1/\theta$, then Eqs.~ (\ref{xtwol:2}) and (\ref{xtwol:1}) give 
\begin{eqnarray} 
p(\bm{e}_a + \bm{e}_b; t, \bm{x}) & = & 2x_ax_b(1-\theta t)e^{-t}+\theta te^{-t} \sum_{j\in [d]} \left(x_ax_jP_{jb} + x_bx_jP_{ja} \right)\ +  \nonumber \\
&&  \qquad \theta(x_aP_{ab} + x_b P_{ba})\left(1 - (1 + t)e^{-t} \right) + o(\theta).  
\label{xtwol:smalltheta}
\end{eqnarray}
%
%
\begin{figure}[t]
\begin{center}
\centerline{\includegraphics[width=\textwidth]{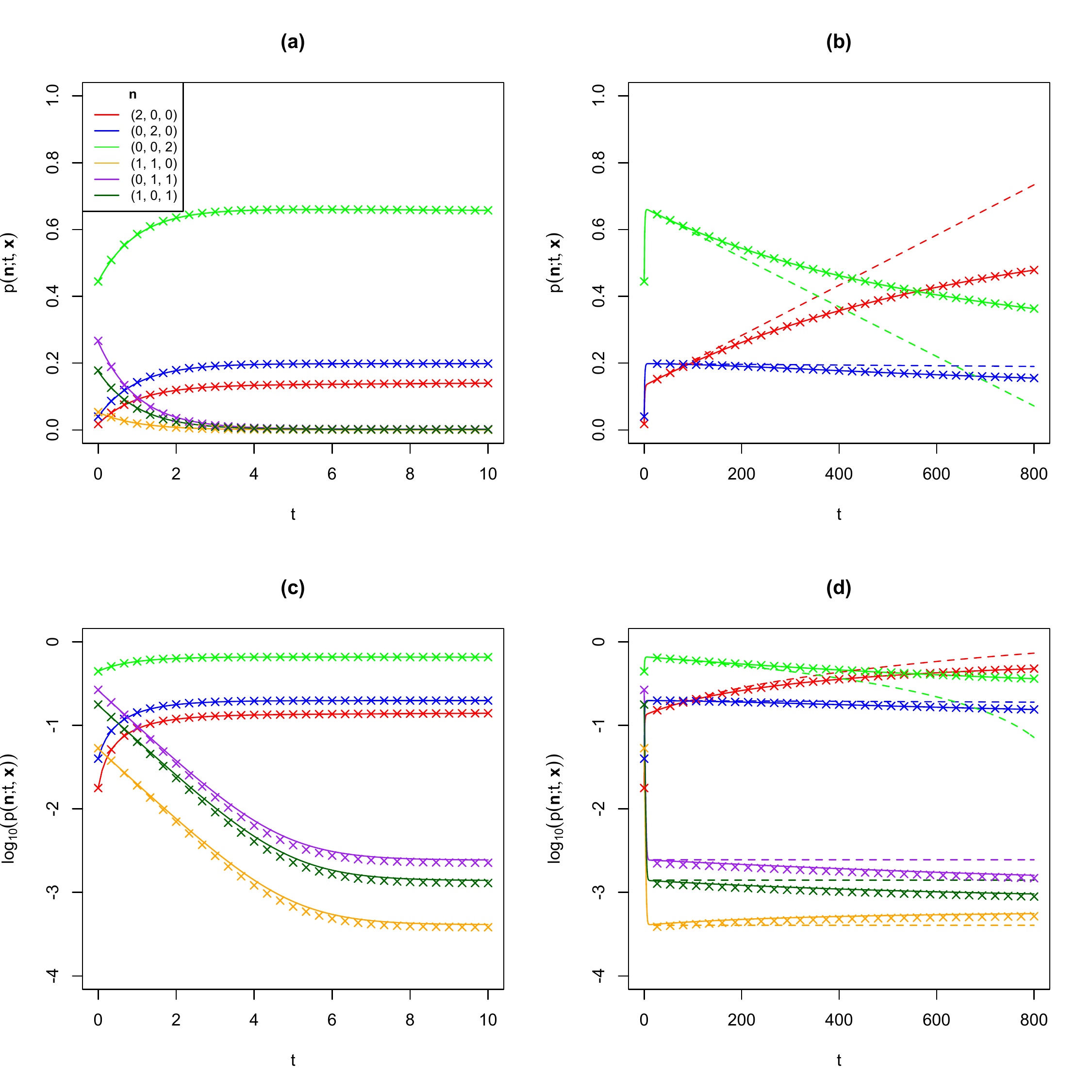}}
\caption{Comparison of ${\cal O}(\theta)$ calculations of the $n = 2$ sampling distributions $p(\bm{n}; t, \bm{x})$ (solid curves 
from Eqs.~(\ref{twol:2}), (\ref{twol:1}), (\ref{xtwol:2}) and (\ref{xtwol:1})); 
approximations to $p(\bm{n}; t, \bm{x})$ valid for $t << 1/\theta$ (dashed curves from Eqs.~(\ref{twol:smalltheta}) and (\ref{xtwol:smalltheta})); 
and exact numerically determined sampling distributions of the corresponding discrete-time, finite population neutral Wright-Fisher model with a population 
size $N = 15$.  The crosses correspond to every 5th time step of the Wright-Fisher model in (a) and (c), and every 400th time step in (b) and (d).  
Plots (a) and (b) are drawn on a linear scale; plots (c) and (d) are logarithmic to accentuate the $n_a = n_b = 1$ cases.  
The $t << 1/\theta$ plots are indistinguishable from the full ${\cal O}(\theta)$ calculation in (a) and (c).  Parameter values are as described in the text.  
}
\label{fig:PlotComparisonWithTheory}
\end{center}
\end{figure}
The above sampling distributions can be compared with an exact numerical calculation of the discrete-time, finite population, multi-allelic neutral Wright-Fisher model.   
Figure~\ref{fig:PlotComparisonWithTheory} shows the evolution of a $d = 3$ allele neutral Wright-Fisher model with a population size $N = 15$ 
and diffusion-limit instantaneous rate matrix 
$$
\gamma = \left( \begin{array}{rrr}
			-0.00056 &  0.00032 &  0.00024 \\
  			 0.00080 & -0.00360 &  0.00280 \\
  			 0.00100 &  0.00100 & -0.00200
		\end{array} \right),
$$
with corresponding 
$\theta = 2 \max_{1\leq i \leq d}(-\gamma_{ii}) = 0.00720$.
 The conventions used for defining the mapping between the discrete Wright-Fisher model and its diffusion limit 
are set out in \cite{BT2016}, Section~2.
The initial distribution corresponding to an initial allele frequency $\mathbf{x} = (2, 3, 10)/15$ was evolved 
over discrete time steps $\delta t = 1/N$ using the full $(N + 1)(N + 2)/2 = 136$ dimensional transition matrix defined in Eq.~(1) of \cite{BT2016}.  

As is generally observed in population genetics models, the diffusion limit is approached rapidly for moderate population sizes.  One observes that the 
${\cal O}(\theta)$ approximation to the sampling distribution agrees well with the discrete Wright-Fisher model over both short and long time scales for all six 
possible allele combinations with $n = \sum_a n_a = 2$.  On the other hand the approximation in Eqs.~(\ref{twol:smalltheta}) and (\ref{xtwol:smalltheta}) is only 
valid over short time scales as expected.  This illustrates the importance of allowing for multiple mutations when there is a single founder lineage over long time scales 
in the small $\theta$ limit.  
%
%
%
%
\section{Sampling Distribution for Arbitrary Sampling Size}
\label{sec:SamplingArbitraryN}

In a full $n$-coalescent tree the edge lengths $T_n,\cdots, T_2$ are independent exponential random variables with rates ${n\choose 2},\ldots, {2\choose 2}$, however in a non-stationary partial coalescent tree back from time $t$ the edge lengths are constrained by the origin. If there are $l$ founder lineages then it is sometimes necessary to use $q_{nl}(k;t) = \mathbb{E}\big [T_k\mathbb{I}\big \{A_n(t) = l\big \}\big ]$. Calculations of $q_{nl}(k;t)$ are made in Lemma \ref{lemma:1}.  
%
%
\begin{lemma}\label{lemma:1} Explicit forms for expected edge lengths $T_k$ in a Kingman coalescent tree conditioned on $l$ founder lineages at time 0 are, 
for $1 \leq l \leq k \leq n$ and $k \geq 2$,
\begin{eqnarray}
q_{nl}(k;t) &\equiv& \mathbb{E}\big [T_k\mathbb{I}\big \{A_n(t)=l\big \}\big ]
\nonumber \\
&=&\int_0^tq_{nl}(\tau)e^{-{k\choose 2}(t-\tau)}d\tau
\nonumber \\
&=&
\sum_{j=l}^n\omega_{jk}(t)(-1)^{j-l}
\frac{
(2j-1)(j+l-2)!
}
{
l!(l-1)!(j-k)!
}
\frac{n_{[j]}}{n_{(j)}},
\label{qnlk:00}
\end{eqnarray}
where
\[
\omega_{jk}(t) =
\begin{cases}
te^{-{k\choose 2}t}&\text{~if~}j=k\\
\frac{1}{{j\choose 2}-{k\choose 2}}
\Big (e^{-{k\choose 2}t} - e^{-{j\choose 2}t}\Big )
&\text{~if~}j\ne k.
\end{cases}
\]
and
\begin{eqnarray}
q_{n1}(1;t)
 &=& \int_0^tq_{n1}(\tau)d\tau
\nonumber \\
&=& \sum_{j=1}^n \int_0^t e^{-{j\choose 2}\tau}d\tau\ (-1)^{j-1}(2j-1)
\frac{n_{[j]}}{n_{(j)}}
\nonumber \\
&=& \sum_{j=1}^n 2\big (1-e^{-{j\choose 2}t}\big ) (-1)^{j-1}\frac{(2j-1)}{j(j-1)}
\frac{n_{[j]}}{n_{(j)}}
\label{qn1k:00}
\end{eqnarray}
\end{lemma}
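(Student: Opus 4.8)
The plan is to read $q_{nl}(k;t)$ as an integral over the possible placements of the level-$k$ sojourn, collapse it to the one-dimensional law $q_{nl}$ by the Markov property, and then feed in the spectral expansion~(\ref{lod:0}) and integrate term by term. First I would pin down the meaning of $T_k$ in the tree truncated at the origin: on the event $\{A_n(t)=l\}$ with $l\le k$ the length contributed by the level with $k$ ancestral lineages is the time spent at $k$ lineages within $[0,t]$, i.e.\ $T_k=\int_0^t\mathbb{I}\{A_n(\tau)=k\}\,d\tau$ (when $l<k$ the whole level-$k$ interval already lies in $[0,t]$, and when $l=k$ it is the interval from the arrival at level $k$ up to time $t$). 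Then Tonelli gives $q_{nl}(k;t)=\mathbb{E}\big[\mathbb{I}\{A_n(t)=l\}\int_0^t\mathbb{I}\{A_n(\tau)=k\}\,d\tau\big]=\int_0^t\mathbb{P}\big(A_n(\tau)=k,\,A_n(t)=l\big)\,d\tau$, and for $\tau\le t$ the Markov property of the block-counting death process factorises $\mathbb{P}(A_n(\tau)=k,A_n(t)=l)=q_{nk}(\tau)q_{kl}(t-\tau)$, so $q_{nl}(k;t)=\int_0^t q_{nk}(\tau)q_{kl}(t-\tau)\,d\tau$.

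Next I would rewrite this convolution as the integral displayed in~(\ref{qnlk:00}). Since the number-of-lineages process is a pure death chain with rates ${j\choose2}$, its transition density has Laplace transform $\widehat{q_{nl}}(\lambda)=\prod_{j=l+1}^{n}{j\choose2}\big/\prod_{j=l}^{n}\big(\lambda+{j\choose2}\big)$, and a one-line cancellation gives $\widehat{q_{nk}}(\lambda)\widehat{q_{kl}}(\lambda)=\big(\lambda+{k\choose2}\big)^{-1}\widehat{q_{nl}}(\lambda)$; inverting this identity yields $q_{nl}(k;t)=\int_0^t q_{nl}(\tau)e^{-{k\choose2}(t-\tau)}\,d\tau$. (For $k=l$ this is immediate without transforms, since $q_{ll}(s)=e^{-{l\choose2}s}$; the transform step is only needed to absorb the intermediate levels $l<j<k$, and could equally be replaced by a first-passage decomposition at the moment level $k$ is reached, using memorylessness of the ${k\choose2}$-sojourn.) Substituting $q_{nl}(\tau)=\sum_{j=l}^{n}\rho_j(\tau)(-1)^{j-l}\tfrac{(2j-1)(j+l-2)!}{l!(l-1)!(j-l)!}\tfrac{n_{[j]}}{n_{(j)}}$ from~(\ref{lod:0}), interchanging the finite sum with the integral, and using the elementary evaluation $\int_0^t\rho_j(\tau)e^{-{k\choose2}(t-\tau)}\,d\tau=\omega_{jk}(t)$ (namely $te^{-{k\choose2}t}$ if $j=k$ and $\big({j\choose2}-{k\choose2}\big)^{-1}(e^{-{k\choose2}t}-e^{-{j\choose2}t})$ if $j\ne k$), then produces the closed form~(\ref{qnlk:00}).

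The case $k=l=1$ falls outside the argument above because ${1\choose2}=0$, but it is simpler: for $\tau\le t$ one has $\{A_n(\tau)=1\}\subseteq\{A_n(t)=1\}$, so the indicator is redundant and $q_{n1}(1;t)=\int_0^t q_{n1}(\tau)\,d\tau$; substituting~(\ref{lod:0}) and integrating termwise with $\int_0^t\rho_j(\tau)\,d\tau=2\big(1-e^{-{j\choose2}t}\big)/\big(j(j-1)\big)$ for $j\ge2$ and $=t$ for $j=1$ gives~(\ref{qn1k:00}), the $j=1$ summand there being understood as the corresponding removable-singularity limit, equal to $t$.

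The only step that is not routine bookkeeping is converting the convolution $q_{nk}*q_{kl}$ into the single integral against $e^{-{k\choose2}(t-\tau)}$ — equivalently, correctly identifying $T_k$ in the truncated tree with $\int_0^t\mathbb{I}\{A_n(\tau)=k\}\,d\tau$ on the relevant event; once that is secured, the remainder reduces to geometric-series and exponential integrals.
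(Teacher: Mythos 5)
Your argument is correct, and it reaches the paper's pivotal identity $q_{nl}(k;t)=\int_0^t q_{nl}(\tau)e^{-{k\choose 2}(t-\tau)}\,d\tau$ by a somewhat different decomposition. The paper writes $\{A_n(t)=l\}=\{T_n+\cdots+T_{l+1}<t<T_n+\cdots+T_l\}$, expresses $\mathbb{E}\big[T_k\mathbb{I}\{A_n(t)=l\}\big]$ as an $(n-l+1)$-fold convolution over the joint density of the sojourn times, identifies it through its Laplace transform, and then needs a separate ``modified argument'' for $k=l$ and a separate appeal to the definition for $l=1$. You instead read the level-$k$ length in the truncated tree as the occupation time $\int_0^t\mathbb{I}\{A_n(\tau)=k\}\,d\tau$, use Tonelli plus the Markov property of the block-counting process to reduce to the two-fold convolution $\int_0^t q_{nk}(\tau)q_{kl}(t-\tau)\,d\tau$, and finish with the one-line cancellation $\widehat{q_{nk}}\,\widehat{q_{kl}}=\big(\lambda+{k\choose 2}\big)^{-1}\widehat{q_{nl}}$. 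What this buys is uniformity: the cases $k>l$, $k=l$, $l=1$ and $k=l=1$ all fall under the same formula, and your occupation-time reading of $T_l$ on $\{A_n(t)=l\}$ is precisely the quantity the paper's modified argument computes and the one actually used in Theorem~\ref{theorem:1}; had you instead taken $T_l$ to be the full exponential sojourn when $k=l$, the expectation would exceed the stated formula by ${l\choose 2}^{-1}q_{nl}(t)$, so making that interpretation explicit is not cosmetic. What the paper's route buys is that it works directly from the joint density of $(T_n,\ldots,T_l)$ without invoking the transition functions $q_{kl}$. Your termwise integration of (\ref{lod:0}) against $e^{-{k\choose 2}(t-\tau)}$, which the paper leaves implicit, is also right; note that it yields the denominator $(j-l)!$ rather than the $(j-k)!$ printed in (\ref{qnlk:00}) (apparently a typo, since the coefficients of (\ref{lod:0}) are unchanged by the integration), and your reading of the $j=1$ term of (\ref{qn1k:00}) as the removable-singularity value $t$ is the correct one.
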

%
%
\begin{proof}
Calculations evaluate the expectation
$
\mathbb{E}\big [T_k\mathbb{I}\big \{A_n(t) = l\big \}\big ]
$
 using the identity that
\[
\{A_n(t) = l\} \equiv \{T_n+\cdots +T_{l+1} <t < T_n+\cdots + T_l\},\>2 \leq l  < n.
\]
For $2 \leq l < k \leq n$
\begin{eqnarray}
&&\mathbb{E}\big [T_k\mathbb{I}\{T_n+\cdots +T_{l+1} <t < T_n+\cdots + T_l\}\big ]
\nonumber \\
&&~= \int_0^t\mathbb{E}\big [T_k\mathbb{I}\{T_n+\cdots +T_{l+1} = \tau\}\big ]
P(T_l > t-\tau)d\tau
\nonumber \\
&&~= \int_0^t\int_{t_i>0, i \in [n]\backslash[l];t_n+\ldots+t_{l+1}=\tau}t_k\prod_{j=l+1}^n{j\choose 2}e^{-{j\choose 2}t_j}dt_n\cdots dt_{l+1}e^{-{l\choose 2}(t-\tau)}d\tau.
\nonumber \\
\label{temp:50}
\end{eqnarray}
Eq.~(\ref{temp:50}) is a convolution of $n-l+1$ functions with a Laplace transform, for $\phi \geq 0$,
\begin{eqnarray*}
&&\int_0^\infty e^{-\phi t}\int_0^t\int_{t_i>0, i \in [n]\backslash[l];t_n+\ldots+t_{l+1}=\tau}t_k\prod_{j=l+1}^n{j\choose 2}e^{-{j\choose 2}t_j}dt_n\cdots dt_{l+1}e^{-{l\choose 2}(t-\tau)}d\tau dt
\nonumber \\
&&~=
{k\choose 2}^{-1}{l\choose 2}^{-1}
\Bigg ( 1 + {k\choose 2}^{-1}\phi\Bigg )^{-1} \prod_{j=l}^n\Bigg (1 + {j\choose 2}^{-1}\phi\Bigg )^{-1}.
\end{eqnarray*}
Therefore the integral is
\begin{eqnarray}
q_{nl}(k;t) ={l\choose 2}^{-1}\int_0^tf_{nl}(\tau)e^{-{k\choose 2}(t-\tau)}d\tau
=\int_0^tq_{nl}(\tau)e^{-{k\choose 2}(t-\tau)}d\tau.
\label{qnlk:0}
\end{eqnarray}
A modified argument is needed to show that the joint probability that $A_n(t) = l$ and mean length of edges while $l$ edges $t - \big (T_n+\cdots + T_{l+1}\big )$ 
is actually the same as Eq.~(\ref{qnlk:0}) when $k=l$. This expectation is
\begin{eqnarray*}
&&\int_0^t\mathbb{P}(T_n+\cdots +T_{l+1}=\tau)(t-\tau)\mathbb{P}(T_l > t-\tau)d\tau
\nonumber \\
&&=
 \int_0^t\int_{t_i>0, i \in [n]\backslash[l];t_n+\ldots,t_{l+1}=\tau}\prod_{j=l+1}^n{j\choose 2}e^{-{j\choose 2}t_j}dt_n\cdots dt_{l+1}(t-\tau)
 e^{-{l\choose 2}(t-\tau)}d\tau
 \nonumber \\
 &&= {l\choose 2}^{-1}\int_0^tf_{nl}(\tau)e^{-{l\choose 2}(t-\tau)}d\tau.
 \end{eqnarray*}
Therefore Eq.~(\ref{qnlk:0}) holds for $2 \leq l \leq k \leq n$. $q_{n1}(1;t)$ follows from definition.
\qed
\end{proof}
%
%

The probability of a sample configuration for $n\geq 2$ is now calculated in Theorem~\ref{theorem:1}.  Combinatorics are more involved because it is necessary to consider the sample configuration from $l \leq n$ founder genes.
%
%
\begin{theorem}\label{theorem:1}
The probability of a sample configuration $\bm{n}$ of $n$ individuals taken at time $t$ in the Wright-Fisher diffusion with generator in Eq.~(\ref{gen:0}) as $\theta \to 0$ is
\begin{eqnarray}
p(\bm{n};t,\bm{x})&=&
\sum_{2\leq l \leq n;|\bm{l}|=l}r_{\bm{n}\mid \bm{l}}m(\bm{l};l,\bm{x})
\Bigg (q_{nl}(t) - \frac{\theta}{2}\sum_{l \leq k \leq n} kq_{nl}(k;t)\Bigg )
\nonumber \\
&&+ \frac{\theta}{2}
\sum_{2\leq l \leq k \leq n;|\bm{l}|=l}
 q_{nl}(k;t)m(\bm{l};l,\bm{x})
 \nonumber \\
&&\times\Bigg (\sum_{i\ne j\in [d]; k_i,k_j; k_i+k_j \leq k}k_jP_{ji}\alpha_{ij}(k,k_i,k_j\mid l,l_i,l_j)\alpha_{ij}(n,n_i,n_j\mid k,k_i + 1,k_j - 1)r_{\bm{n}\backslash ij\mid\bm{l}\backslash ij}
\nonumber \\
&&~~~~+ \sum_{j\in [d]; k_j \leq k}k_jP_{jj}\alpha_{j}(k,k_j\mid l,l_j)
\alpha_{j}(n,n_j\mid k,k_j)r_{\bm{n}\backslash j\mid\bm{l}\backslash j} \Bigg )
\nonumber \\&&+
\sum_{a \in [d]}\mathbb{I}\{\bm{n}=n\bm{e}_a\}
\int_0^t(\bm{x}e^{\gamma (t-w)})_a \Big (q_{n2}(w) - \frac{\theta}{2}(1-P_{aa})\sum_{k=2}^nkq_{n2}(k;w)\Big )dw
\nonumber \\
&&+\frac{\theta}{2}\sum_{a\ne b \in [d]}\mathbb{I}\{\bm{n}=n_a\bm{e}_a+n_b\bm{e}_b\}
\nonumber \\
&&~~~\times\sum_{k=2}^nk\int_0^tq_{n2}(k;w)\Big (p_{nk}(n_b)(\bm{x}e^{\gamma (t-w)})_a
P_{ab} + p_{nk}(n_a)(\bm{x}e^{\gamma (t-w)})_b
P_{ba}\Big )dw + o(\theta),
\nonumber \\
\label{config:3a}
\end{eqnarray}
where 
$r_{\bm{n}\mid \bm{l}}, $ expressed in Eq.~(\ref{reqn:0}),
is the probability of a configuration $\bm{n}$ in a partial coalescent tree from a configuration $\bm{l} \leq \bm{n}$;
\begin{eqnarray*}
\alpha_{ij}(k,k_i,k_j\mid l,l_i,l_j) &=& {\cal DM}_{(l\backslash ij,l_i,l_j),k-l}
\big (k\backslash ij - l \backslash ij,k_i-l_i,k_j-l_j\big ),
\nonumber \\
\alpha_{j}(k,k_j\mid l,l_j) &=& {\cal DM}_{(l\backslash j,l_j),k-l}
\big (k\backslash j - l \backslash j,k_j-l_j\big ),
\end{eqnarray*}
with $q_{nl}(t) := \mathbb{P}\big (A_n(t) = l\big )$; for $1 \leq l \leq k \leq n$ and $k \geq 2$ (see Eq~(\ref{lod:0}));  and $q_{nl}(k;t) := \mathbb{E}\big [T_k\mathbb{I}\big \{A_n(t)=l\big \}\big ]$ with explicit forms in Lemma~\ref{lemma:1}. The Markov transition functions for mutation rate change are in the matrix $e^{\gamma t}$, where $\gamma = \frac{\theta}{2}(P-I)$; 
$m(\bm{l}, l, \bm{x})$ is the multinomial distribution Eq.~(\ref{multinomial}); and $p_{nk}(c)$ is defined in Eq.~(\ref{pnkc:0}).  
\end{theorem}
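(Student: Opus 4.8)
## Proof Proposal

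The plan is to set up a coalescent argument for a sample of $n$ individuals taken at time $t$, conditioning on the number $l$ of founder lineages surviving back to time $0$, and to mimic the $n = 2$ calculation of Section~\ref{sec:SamplingNEquals2} in full combinatorial generality. First I would partition the probability $p(\bm{n}; t, \bm{x})$ according to whether $l \geq 2$ or $l = 1$. For $l \geq 2$, the argument from the $n = 2$ case (and from Fig.~\ref{fig:lineages}) shows that to $\mathcal{O}(\theta)$ there is at most one mutation event anywhere in the partial coalescent tree back to time $0$; I would first handle the no-mutation contribution, which gives the founder types $\bm{l}$ drawn multinomially as $m(\bm{l}; l, \bm{x})$, the founder-to-sample expansion weighted by the Polya-urn factor $r_{\bm{n}\mid\bm{l}}$ of Eq.~(\ref{reqn:0}), and the probability $q_{nl}(t)$ of having exactly $l$ founder lineages, corrected by subtracting the probability of any mutation on the tree given $A_n(t) = l$. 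That correction is where the expected edge lengths enter: while there are $k$ edges the total mutation rate is $\tfrac{\theta}{2}k$, so the expected number of mutations while at level $k$ is $\tfrac{\theta}{2}k\,q_{nl}(k;t)/q_{nl}(t)$ in the conditional sense, producing the term $q_{nl}(t) - \tfrac{\theta}{2}\sum_{l \le k \le n} k\,q_{nl}(k;t)$.

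Next I would compute the single-mutation contribution for $l \geq 2$. Here one conditions on the mutation occurring on one of the $k$ edges present during the time interval corresponding to level $k$; the probability weight for this event is $\tfrac{\theta}{2}\,q_{nl}(k;t)$ once we sum over which edge carries the mutation and fold in the founder type distribution $m(\bm{l}; l, \bm{x})$. The key bookkeeping step is to track how a single type-$j \to i$ mutation on one of the $k$ edges partitions the descendants: the edge configuration at level $k$ must be compatible with the founder configuration $\bm{l}$ (a Dirichlet-multinomial factor $\alpha_{ij}$ or $\alpha_{j}$ coming from the sub-urn on the relevant coordinates, via Eq.~(\ref{configuration:xx})), and then the post-mutation level-$k$ configuration must grow to the sample configuration $\bm{n}$ (a second Dirichlet-multinomial factor), with the remaining coordinates handled by $r_{\bm{n}\backslash ij\mid\bm{l}\backslash ij}$ as defined after Eq.~(\ref{extend:0}). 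The factor $k_j$ counts which of the $k_j$ type-$j$ edges mutates, and $P_{ji}$ is the mutation transition probability; the degenerate case $i = j$ (mutation not changing type) must be kept separately because it contributes at $\mathcal{O}(\theta)$ to the non-segregating configurations, giving the $P_{jj}$ term. I would verify the combinatorial identity that the product of the two Dirichlet-multinomial factors with $r_{\bm{n}\backslash ij\mid\bm{l}\backslash ij}$ correctly reproduces the joint law of the size-labelled family decomposition — this is essentially the statement that a Polya urn observed at an intermediate time, with one ball's colour flipped, factorizes in the claimed way.

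For the $l = 1$ case I would follow the $n = 2$ template of Eqs.~(\ref{twol:1}) and (\ref{xtwol:1}) closely. Conditioning on complete coalescence at time $w < t$ back, the single founder lineage from time $0$ to the MRCA carries any number of mutations, which is exactly the Markov semigroup $e^{\gamma(t-w)}$ acting on the initial frequency vector $\bm{x}$ (this is where letting the mutation count be unbounded is essential, and why $e^{\gamma t} \to \bm{\pi}$ recovers the stationary formula Eq.~(\ref{thm:c})). From the MRCA at level $2$ down to time $t$ — a span during which the tree is a genuine $n$-coalescent restricted to depth $w$ — one again allows at most one mutation to $\mathcal{O}(\theta)$: zero mutations gives the non-segregating term with weight $\int_0^t (\bm{x}e^{\gamma(t-w)})_a\,q_{n2}(w)\,dw$ corrected by $-\tfrac{\theta}{2}(1-P_{aa})\sum_k k\,q_{n2}(k;w)$, and one mutation on one of the $k$ edges at level $k$, splitting off a type-$b$ family of size $n_b$, gives the weight $\tfrac{\theta}{2}\,k\,q_{n2}(k;w)\,p_{nk}(n_b)$ via the single-edge subtend probability Eq.~(\ref{pnkc:0}), summed symmetrically over $a \leftrightarrow b$. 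Integrating over $w \in (0,t)$ and assembling all four blocks yields Eq.~(\ref{config:3a}).

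The main obstacle I expect is the single-mutation combinatorics for $l \geq 2$: correctly identifying that the relevant weight for a mutation while the tree has $k$ edges is $\tfrac{\theta}{2}\,q_{nl}(k;t)$ rather than something involving the unconditioned edge-length density, and then disentangling the two-stage Polya-urn growth (founders $\to$ level-$k$ edges $\to$ sample) so that the mutation inserted at the intermediate stage produces precisely the product $\alpha_{ij}(k,\cdot\mid l,\cdot)\,\alpha_{ij}(n,\cdot\mid k,\cdot)\,r_{\bm{n}\backslash ij\mid\bm{l}\backslash ij}$. Verifying that this product, summed over the intermediate configuration, is consistent with a direct one-stage count — and that the marginalization identity of Eq.~(\ref{configuration:xx}) makes the "delete coordinates $i,j$" factor legitimate — is the delicate bookkeeping step; everything else is a matter of carefully expanding $e^{-\theta w}$, $(1 + P_{aa}\theta w)$, $e^{\gamma(t-w)} = I + \tfrac{\theta}{2}(t-w)(P-I) + o(\theta)$ and collecting terms to first order in $\theta$, exactly as in the $n = 2$ calculation, then taking $t \to \infty$ as a consistency check against Eq.~(\ref{thm:c}).
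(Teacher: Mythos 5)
Your proposal is correct and follows essentially the same route as the paper: the same decomposition into $l\geq 2$ founders with at most one mutation versus $l=1$ with the semigroup $e^{\gamma(t-w)}$ on the single lineage to the MRCA, the same use of $q_{nl}(t)$ and $q_{nl}(k;t)$ for the no-mutation and one-mutation weights, and the same two-stage Polya-urn factorization $\bm{l}\to\bm{k}\to\bm{k}-\bm{e}_j+\bm{e}_i\to\bm{n}$ resolved by the marginalization identity of Eq.~(\ref{configuration:xx}) to yield the product $\alpha_{ij}(k,\cdot\mid l,\cdot)\,\alpha_{ij}(n,\cdot\mid k,\cdot)\,r_{\bm{n}\backslash ij\mid\bm{l}\backslash ij}$. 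The combinatorial step you flag as delicate is handled in the paper exactly as you propose, by summing over $\bm{k}\backslash ij$ with $k,k_i,k_j$ held fixed.
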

%
%
\begin{proof}

If there are no mutations occurring in the partial coalescent tree back to time 0 and the $l \geq 1$ founders have a configuration $\bm{l}$ then from a Polya urn argument considering $n-l$ branchings the probability of a configuration $\bm{n}$ at $t$ is
$r_{\bm{n}\mid \bm{l}}$.
The probability of a configuration of $\bm{n}$ when $l \geq 2$, and there is no mutation is 
\begin{eqnarray}
&&\sum_{2\leq l \leq n;|\bm{l}|=l}r_{\bm{n}\mid \bm{l}}m(\bm{l};l,\bm{x})
\mathbb{E}\Big [\mathbb{I}\big \{A_n(t) = l\big \}e^{-\frac{\theta}{2}\sum_{a=l}^naT_a}\Big ]
\nonumber \\
&&~=\sum_{2\leq l \leq n;|\bm{l}|=l}r_{\bm{n}\mid \bm{l}}m(\bm{l};l,\bm{x})
\Bigg (q_{nl}(t) - \frac{\theta}{2}\sum_{k=l}^n kq_{nl}(k;t)\Bigg )  + \mathcal{O}(\theta^2).
\label{config:11}
\end{eqnarray}
The probability of  a configuration $\bm{n}$, beginning from a configuration $\bm{l}$,  when there is one mutation from $j$ to $i\ne j$ while $k$ edges is
\begin{eqnarray}
&&\sum_{|\bm{k}|=k}
r_{\bm{k}\mid \bm{l}}
\mathbb{E}\Big [\frac{\theta}{2}k_jT_k\mathbb{I}\big \{A_n(t) = l\big \}e^{-\frac{\theta}{2}\sum_{a=l}^naT_a}\Big ]
P_{ji}r_{\bm{n}\mid \bm{k}+\bm{e}_i-\bm{e}_j}
\nonumber \\
&&~=\frac{\theta}{2}q_{nl}(k;t)\sum_{|\bm{k}|=k}
r_{\bm{k}\mid \bm{l}}k_jP_{ji}r_{\bm{n}\mid \bm{k}+\bm{e}_i-\bm{e}_j} + \mathcal{O}(\theta^2).
\label{temp:100}
\end{eqnarray}
The argument used to obtain Eq.~(\ref{temp:100}) is that the configuration from the $l$ founder edges to when $k$ edges; to after a mutation; then to when $n$ edges follows a Markov pattern $\bm{l}\to \bm{k} \to \bm{k} - \bm{e}_j + \bm{e}_i\to \bm{n}$.
The sum in Eq.~(\ref{temp:100}) can be simplified by summing over $\bm{k}$ keeping $k$, $k_i$ and $k_j$ fixed.
Now by using Eq.~(\ref{configuration:xx})
\begin{eqnarray}
r_{\bm{k}\mid\bm{l}} &=& \alpha_{ij}(k,k_i,k_j\mid l,l_i,l_j)
r_{\bm{k}\backslash ij\mid\bm{l}\backslash ij}
\nonumber \\
r_{\bm{n}\mid\bm{k}+\bm{e}_i - \bm{e}_j} &=&
\alpha_{ij}(n,n_i,n_j\mid k,k_i+1,k_j-1)
r_{\bm{n}\backslash ij\mid\bm{k}\backslash ij}.
\label{temp:101}
\end{eqnarray}
Summing over $\bm{k}\backslash ij$ with variables indexed by $i$ and $j$ held fixed,
using Eq.~(\ref{temp:101}),
\begin{eqnarray*}
&&\sum_{|\bm{k}\backslash ij |=k - k_i - k_j}
r_{\bm{k}\mid \bm{l}}r_{\bm{n}\mid \bm{k}+\bm{e}_i-\bm{e}_j}
\nonumber \\
&&~~~~~~
=\alpha_{ij}(k,k_i,k_j\mid l,l_i,l_j)\alpha_{ij}(n,n_i,n_j \mid k,k_i+1,k_j-1) r_{\bm{n}\backslash ij\mid\bm{l}\backslash ij}.
\end{eqnarray*}

Finally the case when $l=1$ where coalescence to a single ancestor occurs before $t$ is evaluated.
There is a need to consider mutations which may occur along the sample edges from the MRCA of the 
sample to the origin. If the MRCA occurs at time $w$ back from the current time then the type of the MRCA is determined by $\bm{x}e^{\gamma (t-w)}$. The probability of  a configuration $\bm{n}=n\bm{e}_a$ when $l=1$ is therefore
\begin{eqnarray}
&&\int_0^t(\bm{x}e^{\gamma (t-w)})_a
\mathbb{E}\big [\mathbb{I}\big \{A_n(w)=2\big \}\big (1+\frac{\theta}{2}\sum_{k=2}^nkT_kP_{aa}\big )e^{-\frac{\theta}{2}\sum_{k=2}^nkT_k}\big ]dw + \mathcal{O}(\theta^2)
\nonumber \\
&&~= \int_0^t(\bm{x}e^{\gamma (t-w)})_a
\Big ( q_{n2}(w) - \frac{\theta}{2}(1-P_{aa})\sum_{k=2}^nkq_{n2}(k;w)\Big ) dw+ \mathcal{O}(\theta^2),
\label{one:11}
\end{eqnarray}
arguing that the MRCA occurs at time $w$ back, the identity of the MRCA is $a$ and there are either no mutations, or one from $a$ to $a$, to the MRCA.

The probability of a configuration $\bm{n} = n_a\bm{e}_a + n_b\bm{e}_b$ similarly needs a consideration of the type of the MRCA being either $a$ or $b$, and a mutation from the MRCA type to the other type while $k$ lineages. This probability is
\begin{eqnarray}
&&\int_0^t(\bm{x}e^{\gamma (t-w)})_aP_{ab}
\mathbb{E}\big [\mathbb{I}\big \{A_n(w) = 2\}e^{-\frac{\theta}{2}\sum_{k=2}^nkT_k}\frac{\theta}{2}\sum_{k=2}^nkT_kp_{nk}(n_b)\big ]dw+ a \leftrightarrow b + \mathcal{O}(\theta^2)
\nonumber \\
&&~ = \frac{\theta}{2}\sum_{k=2}^nk\int_0^t(\bm{x}e^{\gamma (t-w)})_aP_{ab}q_{n2}(k;w)p_{nk}(n_b)dw + 
a \leftrightarrow b  + \mathcal{O}(\theta^2)
\label{two:22}
\end{eqnarray}

The total probability of a configuration of $\bm{n}$, Eq.~(\ref{config:3a}) in the statement of the theorem, is found from 
Eq.~(\ref{config:11}),  the simplified form of Eq.~(\ref{temp:100}), and Eqs.~(\ref{one:11}) and  (\ref{two:22}).
\qed
\end{proof}
%
%
%
%
\begin{corollary}\label{corollary:1}\citep{BG2019}\\
The stationary sampling distribution is
\begin{eqnarray}
\lim_{t\to \infty} p(\bm{n};t,\bm{x})
&=& \sum_{a\in [d]}\mathbb{I}\big \{\bm{n}=n\bm{e}_a\big \}\pi_a\big (1 - \theta (1-P_{aa})\big )\sum_{k=1}^{n-1}\frac{1}{k}\big )
\nonumber \\
&&+\sum_{a\ne b \in [d]}\mathbb{I}\{\bm{n}=n_a\bm{e}_a+n_b\bm{e}_b\}
\theta\big (\pi_aP_{ab}\frac{1}{n_b} + \pi_bP_{ba}\frac{1}{n_a}\big ) + \mathcal{O}(\theta^2).
\nonumber \\
\label{corr:11}	
\end{eqnarray}
\end{corollary}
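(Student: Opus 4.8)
The plan is to pass to the limit $t \to \infty$ directly in the expression (\ref{config:3a}) of Theorem~\ref{theorem:1} and show that only the two $l = 1$ blocks survive. Since the sums over $l$, $k$ and over the configurations $\bm l$, $\bm k$ are finite, it is enough to control each summand. First I would dispose of the terms with $l \geq 2$: the factor $q_{nl}(\tau) = \mathbb{P}(A_n(\tau) = l)$ decays exponentially as $\tau \to \infty$ — from (\ref{lod:0}) its slowest component is $e^{-\binom{l}{2}\tau}$ with $\binom{l}{2} \geq 1$ — so $q_{nl}(t) \to 0$, and the convolution form $q_{nl}(k;t) = \int_0^t q_{nl}(\tau)e^{-\binom{k}{2}(t - \tau)}\,d\tau$ from Lemma~\ref{lemma:1} then forces $q_{nl}(k;t) \to 0$ for every $k$. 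Hence the first two blocks of (\ref{config:3a}), indexed by $2 \leq l \leq n$, vanish in the limit; in particular the limiting $\mathcal{O}(\theta)$ expression is $0$ whenever $\bm n$ has more than two nonzero entries, as in (\ref{thm:c}).

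Next I would handle the two surviving $l = 1$ integrals. Because $\bm\pi\gamma = \bm 0$ while the remaining eigenvalues of $\gamma = \tfrac{\theta}{2}(P - I)$ have strictly negative real part, $\bm x e^{\gamma s} \to \bm\pi$ as $s \to \infty$, so for each fixed $w$, since $t - w \to \infty$, one has $(\bm x e^{\gamma(t - w)})_a \to \pi_a$ pointwise in $w$. The prefactor $(\bm x e^{\gamma(t-w)})_a$ lies in $[0,1]$, while $q_{n2}(w)$ and $\sum_{k=2}^n k\,q_{n2}(k;w)$ are integrable on $[0,\infty)$ — each is a finite linear combination of functions $w^m e^{-\binom{j}{2}w}$ with $j \geq 2$ — so I would invoke dominated convergence to move the limit through the integrals, replacing $\int_0^t$ by $\int_0^\infty$ and $(\bm x e^{\gamma(t-w)})_a$ by $\pi_a$. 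This sends the $\bm n = n\bm e_a$ contribution to $\pi_a\bigl(\int_0^\infty q_{n2}(w)\,dw - \tfrac{\theta}{2}(1 - P_{aa})\sum_{k=2}^n k\int_0^\infty q_{n2}(k;w)\,dw\bigr)$ and the $\bm n = n_a\bm e_a + n_b\bm e_b$ contribution to $\tfrac{\theta}{2}\sum_{k=2}^n k\bigl(\pi_a P_{ab}\,p_{nk}(n_b) + \pi_b P_{ba}\,p_{nk}(n_a)\bigr)\int_0^\infty q_{n2}(k;w)\,dw$.

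It then remains to evaluate the elementary integrals and combinatorial sums. Since $q_{n2}(w) = \binom{2}{2}^{-1}f_{n2}(w) = f_{n2}(w)$ is the density of $T_n + \cdots + T_2$, we have $\int_0^\infty q_{n2}(w)\,dw = 1$; and Fubini applied to $q_{n2}(k;w) = \int_0^w q_{n2}(\tau)e^{-\binom{k}{2}(w - \tau)}\,d\tau$ gives $\int_0^\infty q_{n2}(k;w)\,dw = \binom{k}{2}^{-1}\int_0^\infty q_{n2}(\tau)\,d\tau = \binom{k}{2}^{-1}$. With $k\binom{k}{2}^{-1} = \tfrac{2}{k-1}$ and $\sum_{k=2}^n \tfrac{1}{k-1} = \sum_{j=1}^{n-1}\tfrac{1}{j}$, the first contribution collapses to $\pi_a\bigl(1 - \theta(1 - P_{aa})\sum_{j=1}^{n-1}\tfrac{1}{j}\bigr)$, the first line of (\ref{corr:11}). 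The same substitutions turn the second contribution into $\theta\bigl(\pi_a P_{ab}\sum_{k=2}^n\tfrac{p_{nk}(n_b)}{k-1} + \pi_b P_{ba}\sum_{k=2}^n\tfrac{p_{nk}(n_a)}{k-1}\bigr)$, after which I would apply the identity $\sum_{k=2}^n\tfrac{p_{nk}(c)}{k-1} = \tfrac{1}{c}$, valid for $1 \leq c \leq n-1$; with (\ref{pnkc:0}) this is the classical evaluation of the expected coalescent branch length subtending $c$ leaves (the expected site-frequency spectrum), equivalently a Vandermonde identity for $\sum_k \binom{n-c-1}{k-2}\big/\bigl[(k-1)\binom{n-1}{k-1}\bigr]$. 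This produces $\theta\bigl(\pi_a P_{ab}\tfrac{1}{n_b} + \pi_b P_{ba}\tfrac{1}{n_a}\bigr)$, the second line of (\ref{corr:11}), and collecting the two surviving contributions over $a$ and over unordered pairs $\{a,b\}$ completes the identification with (\ref{corr:11}), recovering the stationary sampling formula of \citet{BG2019}.

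The main obstacle is the middle step: justifying that the limit can be moved inside the $l = 1$ integrals even though $\bm x e^{\gamma(t - w)}$ only relaxes to $\bm\pi$ on the slow $1/\theta$ time scale, and then pinning down the two $\binom{k}{2}$-weighted sums — above all recognising $\sum_k p_{nk}(c)/(k-1) = 1/c$ as the standard frequency-spectrum identity rather than re-deriving it. Everything else is bookkeeping of finitely many exponential integrals, and taking $t \to \infty$ needs no uniformity in $\theta$ because (\ref{config:3a}) is already written as an $\mathcal{O}(\theta)$ expansion.
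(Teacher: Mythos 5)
Your proposal is correct and follows essentially the same route as the paper's proof: discard the $l\geq 2$ blocks as $t\to\infty$, pass the limit $(\bm{x}e^{\gamma(t-w)})_a\to\pi_a$ through the $l=1$ integrals, and evaluate $\int_0^\infty q_{n2}(k;w)\,dw={k\choose 2}^{-1}=\mathbb{E}[T_k]$ together with the frequency-spectrum identity $\sum_{k=2}^n p_{nk}(c)/(k-1)=1/c$. The only differences are cosmetic — you justify the vanishing terms and the interchange of limit and integral explicitly (exponential decay, dominated convergence, Fubini on the convolution form) where the paper argues probabilistically via the MRCA and $\mathbb{E}[T_k]$, and you recognise the final combinatorial identity as the classical site-frequency-spectrum evaluation rather than citing \citet{BG2019}.
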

%
%
\begin{proof} 
As $t \to \infty$ the MRCA is reached before $t$ with probability $1$, so only the last two terms in Eq.~(\ref{config:3a}) need to be considered. The time to the MRCA $w$ is finite and $(\bm{x}e^{\gamma (t-w)})_a \to \pi_a$.
The limit of the integrals in the term when $\bm{n}= n\bm{e}_a$ is
\begin{equation*}
\int_0^t (\bm{x}e^{\gamma (t-w)})_aq_{n2}(w)dw \to \pi_a\int_0^\infty q_{n2}(w)dw = \pi_a,
\end{equation*}
and
\begin{eqnarray*}
\frac{1}{2}\sum_{k=2}^nk	 \int_0^t(\bm{x}e^{\gamma (t-w)})_aq_{n2}(k;w)dw &\to& 
\frac{1}{2}\sum_{k=2}^nk	\pi_a\int_0^tq_{n2}(k;w)dw 
\nonumber \\
&\to&\frac{1}{2}\sum_{k=2}^nk\pi_a\int_0^\infty\mathbb{E}\big [T_k\mathbb{I}\big \{T_n+\cdots+T_2\in (w,w+dw)\big \}\big ]
\nonumber \\
&=& \frac{1}{2}\pi_a\sum_{k=2}^nk\mathbb{E}\big [T_k\big ]
= \pi_a\sum_{k=1}^{n-1}\frac{1}{k},
\end{eqnarray*}
giving the first term in Eq.~(\ref{corr:11}). The term in Eq.~(\ref{config:3a}) where $\bm{n}=n_a\bm{e}_a+n_b\bm{e}_b$ converges to
\begin{equation}
\theta\pi_aP_{ab}\sum_{k=2}^n\frac{1}{k-1}p_{nk}(n_b) + a \leftrightarrow b
= \theta \pi_aP_{ab}\frac{1}{n_b} + 	a \leftrightarrow b.
\label{temp:550}
\end{equation}
Simplification of the sum in Eq.~(\ref{temp:550}) is done in \citet{BG2019}.
\qed
\end{proof}
%
%
The combinatorial terms in Theorem~\ref{theorem:1} are much simpler if the initial frequency is $\bm{x}=\bm{e}_a$.
%
%
\begin{corollary} \label{corollary:x}
Suppose the initial ancestor types are all $a$, so $\bm{x}=\bm{e}_a$. As $\theta \to 0$ 
the probability of a sample configuration where all individuals are type $a$ is
\begin{eqnarray}
&&p(n\bm{e}_a;t,\bm{e}_a)
\nonumber \\
 &&~~
 =1 - q_{n1}(t) - \frac{\theta}{2}(1-P_{aa})\sum_{2 \leq l \leq k \leq n} kq_{nl}(k;t)
 \nonumber \\
 &&~+ \int_0^t (e^{\gamma(t-w)})_{aa}\Big ( q_{n2}(w) - \frac{\theta}{2}(1-P_{aa}) \sum_{k=2}^nkq_{n2}(k;w) \Big )dw
 +o(\theta);
\label{config:1000x}
\end{eqnarray}
the probability of a sample configuration with two types $a$ and $b$, $a \ne b$ is
\begin{eqnarray}
&&p(n_a\bm{e}_a+n_b\bm{e}_b;t,\bm{e}_a)
\nonumber \\
&&~~~~~~ =
\frac{\theta}{2}P_{ab}\sum_{2 \leq l \leq k \leq n}   kq_{nl}(k;t)p_{nk}(n_b)
\nonumber \\
&&~~~~~~~+\frac{\theta}{2}P_{ab}\int_0^t(e^{\gamma (t-w)})_{aa}\sum_{k=2}^nkq_{n2}(k;w)dw\ p_{nk}(n_b)
\nonumber \\
&&~~~~~~~+\frac{\theta}{2}P_{ba}\int_0^t(e^{\gamma (t-w)})_{ab}\sum_{k=2}^nkq_{n2}(k;w)dw\ p_{nk}(n_a)
 + o(\theta);
\label{config:00x}
\end{eqnarray}
the probability of a sample configuration with two types $b$ and $c$, $a \ne b \ne c \ne a$,  is
\begin{eqnarray}
&&p(n_b\bm{e}_b+n_c\bm{e}_c;t,\bm{e}_a)
\nonumber \\
&&~~~~~~ =
\frac{\theta}{2}P_{bc}\int_0^t(e^{\gamma (t-w)})_{ab}\sum_{k=2}^nkq_{n2}(k;w)dw\ p_{nk}(n_c)
\nonumber \\
&&~~~~~~+\frac{\theta}{2}P_{cb}\int_0^t(e^{\gamma (t-w)})_{ac}\sum_{k=2}^nkq_{n2}(k;w)dw\ p_{nk}(n_b)
 + o(\theta);
\label{config:00xa}
\end{eqnarray}
and the probability of a sample configuration with greater than two types is of 
$o(\theta)$.
\end{corollary}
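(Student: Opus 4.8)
The plan is to obtain Corollary~\ref{corollary:x} purely as a specialisation of Theorem~\ref{theorem:1}: substitute $\bm{x}=\bm{e}_a$ into Eq.~(\ref{config:3a}), simplify the combinatorial weights, and collect the surviving terms according to the type-support of $\bm{n}$. No new probabilistic input is needed, and the $o(\theta)$ error is inherited verbatim from Theorem~\ref{theorem:1}; the content is entirely that the degenerate initial frequency kills most of the sums.

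First I would record the two elementary effects of putting $\bm{x}=\bm{e}_a$. The multinomial factor collapses, $m(\bm{l};l,\bm{e}_a)=\mathbb{I}\{\bm{l}=l\bm{e}_a\}$, so every sum over founder configurations reduces to the single term $l_a=l$, $l_i=0$ for $i\neq a$; in particular $r_{n\bm{e}_b\mid l\bm{e}_a}=\mathbb{I}\{b=a\}$, since a Polya urn started with one colour never produces another. Also $(\bm{e}_ae^{\gamma s})_i=(e^{\gamma s})_{ai}$. Feeding $l_i=0$ $(i\neq a)$ into the Dirichlet--multinomial weights $\alpha_{ij},\alpha_j$ of the one-mutation block, and using that a Dirichlet--multinomial never draws a colour whose parameter is $0$, forces in that block: the mutating type to be $j=a$; all $k$ edges present at the mutation to carry type $a$ (so $k_a=k$, $k_i=0$); and hence the sample to support at most the founder type $a$ together with the single mutant type $i$. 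What then survives of the product of the two $\alpha$-weights is $\alpha_{ia}(k,0,k\mid l,0,l)\,\alpha_{ia}(n,n_i,n_a\mid k,1,k-1)=1\cdot p_{nk}(n_i)$ by the definition of $p_{nk}$ in Eq.~(\ref{pnkc:0}), while the $P_{jj}$ sub-sum survives only for $j=a$ and contributes $kP_{aa}\mathbb{I}\{\bm{n}=n\bm{e}_a\}$.

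With these reductions in hand I would read off the four blocks of Eq.~(\ref{config:3a}) in turn. The no-mutation block gives $\mathbb{I}\{\bm{n}=n\bm{e}_a\}\big(1-q_{n1}(t)-\tfrac{\theta}{2}\sum_{2\leq l\leq k\leq n}kq_{nl}(k;t)\big)$, using $\sum_{l\geq2}q_{nl}(t)=1-q_{n1}(t)$; the $P_{aa}$ part of the one-mutation block adds $+\tfrac{\theta}{2}P_{aa}\mathbb{I}\{\bm{n}=n\bm{e}_a\}\sum_{2\leq l\leq k\leq n}kq_{nl}(k;t)$, and the two combine into the $(1-P_{aa})$ coefficient of Eq.~(\ref{config:1000x}); the $l=1$ block with the founder type preserved supplies the integral term of Eq.~(\ref{config:1000x}). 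For $\bm{n}=n_a\bm{e}_a+n_b\bm{e}_b$ with $b\neq a$ only the $a\to b$ part of the one-mutation block (the first line of Eq.~(\ref{config:00x})) and the $l=1$ block survive, the latter splitting into MRCA of type $a$ followed by an $a\to b$ mutation and MRCA of type $b$ — reached from the founder $a$ along the single lineage with weight $(e^{\gamma(t-w)})_{ab}$ — followed by a $b\to a$ mutation (the last two lines of Eq.~(\ref{config:00x})). For $\bm{n}=n_b\bm{e}_b+n_c\bm{e}_c$ with $a,b,c$ distinct the first two blocks vanish identically, since one mutation out of an all-$a$ tree cannot create two non-founder types, and only the $l=1$ block remains, giving Eq.~(\ref{config:00xa}); if $\bm{n}$ has more than two non-zero entries every block vanishes, whence $p(\bm{n};t,\bm{e}_a)=o(\theta)$. (The residual one-type case $\bm{n}=n\bm{e}_b$, $b\neq a$, likewise drops out of the $l=1$ block as $\int_0^t(e^{\gamma(t-w)})_{ab}\big(q_{n2}(w)-\tfrac{\theta}{2}(1-P_{bb})\sum_{k=2}^nkq_{n2}(k;w)\big)dw+o(\theta)$.)

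I expect the main obstacle to be the index bookkeeping inside the one-mutation block of Eq.~(\ref{config:3a}): verifying that the sum of $r_{\bm{k}\mid l\bm{e}_a}\,r_{\bm{n}\mid\bm{k}+\bm{e}_i-\bm{e}_a}$ over $\bm{k}$ with $k,k_i,k_a$ held fixed genuinely collapses to $p_{nk}(n_i)$ once the zero-parameter constraints are imposed, and that the founder-count range $\sum_{l\geq2}$ reconciles exactly with the range $\sum_{2\leq l\leq k\leq n}$ of the one-mutation sums so that the $-\tfrac{\theta}{2}\sum kq_{nl}(k;t)$ from the no-mutation block and the $+\tfrac{\theta}{2}P_{aa}\sum kq_{nl}(k;t)$ from the null-mutation block assemble into a single $(1-P_{aa})$ factor. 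Everything else is a routine matching of terms, with the error estimate carried over unchanged from Theorem~\ref{theorem:1}.
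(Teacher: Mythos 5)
Your specialization of Theorem~\ref{theorem:1} is correct: with $\bm{x}=\bm{e}_a$ the collapses you flag do check out ($m(\bm{l};l,\bm{e}_a)=\mathbb{I}\{\bm{l}=l\bm{e}_a\}$, the mutating type forced to be $a$ with $k_a=k$, $\alpha_{ia}(k,0,k\mid l,0,l)=1$, $\alpha_{ia}(n,n_i,n_a\mid k,1,k-1)=p_{nk}(n_i)$, the residual $r$-factors equal to one, and the $-\frac{\theta}{2}\sum kq_{nl}(k;t)$ and $+\frac{\theta}{2}P_{aa}\sum kq_{nl}(k;t)$ terms assembling into the $(1-P_{aa})$ coefficient), and the $l=1$ block gives the matrix-exponential integrals exactly as stated. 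This is essentially the paper's approach: the paper simply writes the corresponding one-mutation coalescent expectations directly for monomorphic founders instead of substituting into Eq.~(\ref{config:3a}), but the computation and the $o(\theta)$ bookkeeping are the same.
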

%
%
\begin{proof}
The probability of a configuration $\bm{n}=n\bm{e}_a$ is
\begin{eqnarray*}
&&\sum_{l=2}^n\mathbb{E}\big [\mathbb{I}\big \{A_n(t)=l\big \}
e^{-\frac{\theta}{2}\sum_{k=l}^nkT_k}\big (1 + P_{aa}\frac{\theta}{2}\sum_{k=l}^nkT_k\big )\big ] 
\nonumber \\
&&+\int_0^t(e^{\gamma(t-w)})_{aa}\mathbb{E}\big [
\mathbb{I}\big \{A_n(w) = 2\big \}
\big (1 + P_{aa}\frac{\theta}{2}\sum_{k=2}^nkT_k\big ) e^{-\frac{\theta}{2}\sum_{k=2}^nkT_k}
\big ] dw
+ o(\theta)
\nonumber \\
\end{eqnarray*}
which is equal to Eq.~(\ref{config:1000x}) after expanding $e^{-\frac{\theta}{2}\sum_{k=2}^nkT_k}$ to $\mathcal{O}(\theta)$.
The probability of a configuration $\bm{n}=n_a\bm{e}_a+n_b\bm{e}_b$ is
\begin{eqnarray*}
&&\frac{\theta}{2}P_{ab}\sum_{l=2}^n\mathbb{E}\big [\mathbb{I}\big \{A_n(t)=l\big \}
\sum_{k=l}^n kT_k  p_{nk}(n_b) e^{-\frac{\theta}{2}\sum_{k=l}^n kT_k}\big ]
\nonumber \\
&&+\frac{\theta}{2}\int_0^t(e^{\gamma(t-w)})_{aa}\mathbb{E}\big [
\mathbb{I}\big \{A_n(w) = 2\big \}P_{ab}\sum_{k=2}^n kT_k p_{nk}(n_b)
e^{-\frac{\theta}{2}\sum_{k=2}^nkT_k}
\big ] dw
\nonumber \\
&&+\frac{\theta}{2}\int_0^t(e^{\gamma(t-w)})_{ab}\mathbb{E}\big [
\mathbb{I}\big \{A_n(w) = 2\big \}P_{ba}\sum_{k=2}^n kT_k p_{nk}(n_a)
e^{-\frac{\theta}{2}\sum_{k=2}^nkT_k}
\big ] dw
+ o(\theta),
\end{eqnarray*}
which is simplifies to Eq.~(\ref{config:00x}). 
A configuration $\bm{n}=n_b\bm{e}_b+n_c\bm{e}_c$, $b\ne c$, can only occur if the MRCA of the sample occurs before $t$ and the identity of the MRCA is $b$ or $c$. The probability of this sample configuration is 
\begin{eqnarray*}
&&\frac{\theta}{2}\int_0^t(e^{\gamma(t-w)})_{ab}\mathbb{E}\big [
\mathbb{I}\big \{A_n(w) = 2\big \}P_{bc}\sum_{k=2}^n kT_k p_{nk}(n_c)
e^{-\frac{\theta}{2}\sum_{k=2}^nkT_k}
\big ] dw
\nonumber \\
&&+\frac{\theta}{2}\int_0^t(e^{\gamma(t-w)})_{ac}\mathbb{E}\big [
\mathbb{I}\big \{A_n(w) = 2\big \}P_{cb}\sum_{k=2}^n kT_k p_{nk}(n_b)
e^{-\frac{\theta}{2}\sum_{k=2}^nkT_k}
\big ] dw
+ o(\theta),
\end{eqnarray*}
which simplifies to Eq.~(\ref{config:00xa}).
\qed
\end{proof}
%
%

One may be interested only in time scales $t << 1/\theta$, in which case the following Corollaries~\ref{corollary:fixedt} and \ref{corollary:4} are relevant.  
If $t$ is fixed and $\theta \to 0$ then the probability of greater than one mutation along the edge from the MRCA to to the origin is $\mathcal{O}(\theta^2)$ and the last 
two terms in Eq.~(\ref{config:3a}) can be simplified.
%
%
\begin{lemma}\label{lemma:2} The following identities relating to the case $l = 2$ will be of use in Corollaries~\ref{corollary:fixedt} and \ref{corollary:4}:  
\begin{equation*}
\begin{split}
\int_0^t q_{n2}(w)dw &= q_{n1}(t),\\
\int_0^t q_{n2}(k;w)dw &= q_{n1}(k;t), \\
\int_0^t(t-w)q_{n2}(w)dw &= q_{n1}(1;t).
\end{split}
\end{equation*}
\end{lemma}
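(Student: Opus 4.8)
The three identities all reduce to one fact, namely that $q_{n2}(w)$ is the probability density of the total coalescence time $T_n+\cdots +T_2$. The plan is to establish the first identity from this observation and then deduce the other two from it by Fubini's theorem.

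First I would observe that $q_{n1}(t)=\mathbb{P}(A_n(t)=1)=\mathbb{P}(T_n+\cdots +T_2\le t)$ is the distribution function of the time to the most recent common ancestor of the $n$ sampled lineages, so that $q_{n1}(t)=\int_0^t f_{n2}(w)\,dw$ with $f_{n2}$ the corresponding density. By the identity~(\ref{identity:100}) with $l=2$ one has $f_{n2}(w)={2\choose 2}q_{n2}(w)=q_{n2}(w)$, which gives $\int_0^t q_{n2}(w)\,dw=q_{n1}(t)$ immediately. (The same identity can be checked analytically by integrating the series~(\ref{lod:0}) for $q_{n2}(w)$ term by term and matching it against~(\ref{lod:0}) for $q_{n1}(t)$, using $\sum_{j=2}^n(-1)^j(2j-1)\,n_{[j]}/n_{(j)}=1$, which is just $\int_0^\infty q_{n2}(w)\,dw=\mathbb{E}[T_2]=1$; the probabilistic route is shorter.)

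For the third identity, Eq.~(\ref{qn1k:00}) gives $q_{n1}(1;t)=\int_0^t q_{n1}(\tau)\,d\tau$; substituting the first identity and interchanging the order of integration over the triangle $\{0\le w\le\tau\le t\}$ yields $q_{n1}(1;t)=\int_0^t\!\int_0^\tau q_{n2}(w)\,dw\,d\tau=\int_0^t (t-w)\,q_{n2}(w)\,dw$. For the second identity I would use the integral representations from Lemma~\ref{lemma:1}, $q_{n2}(k;w)=\int_0^w q_{n2}(\tau)e^{-{k\choose 2}(w-\tau)}\,d\tau$ and $q_{n1}(k;t)=\int_0^t q_{n1}(\tau)e^{-{k\choose 2}(t-\tau)}\,d\tau$: applying Fubini to $\int_0^t q_{n2}(k;w)\,dw$ over $\{0\le\tau\le w\le t\}$ produces ${k\choose 2}^{-1}\int_0^t q_{n2}(\tau)\big(1-e^{-{k\choose 2}(t-\tau)}\big)\,d\tau$, while substituting the first identity into $q_{n1}(k;t)$ and again applying Fubini produces the identical integral, so the two sides coincide.

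I do not expect a genuine obstacle; the only point needing a little care is the boundary case $k=l=2$, where $q_{n2}(2;w)$ denotes the modified quantity of Lemma~\ref{lemma:1} (the ongoing two-lineage phase truncated at $w$) rather than a literal edge-length expectation. Since Lemma~\ref{lemma:1} nevertheless asserts the integral representation $q_{nl}(k;t)=\int_0^t q_{nl}(\tau)e^{-{k\choose 2}(t-\tau)}\,d\tau$ for all $2\le l\le k\le n$, the argument above is unaffected, and the degenerate case $n=2$ (empty sums in~(\ref{lod:0})) requires no separate treatment.
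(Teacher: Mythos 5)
Your proof is correct. For the first and third identities it essentially coincides with the paper's: the first is the same probabilistic argument via Eq.~(\ref{identity:100}), and for the third you use $q_{n1}(1;t)=\int_0^t q_{n1}(\tau)d\tau$ together with the first identity and interchange the order of integration, where the paper integrates by parts after noting $q_{n2}=q_{n1}'$ --- the same computation in different clothing. The second identity is where you genuinely diverge. The paper proves it in one probabilistic line, disintegrating over the time of the MRCA exactly as in the derivation of Eq.~(\ref{identity:100}): $q_{n1}(k;t)=\mathbb{E}\big[T_k\mathbb{I}\{A_n(t)=1\}\big]=\int_0^t\mathbb{E}\big[T_k\mathbb{I}\{T_n+\cdots+T_2\in(w,w+dw)\}\big]=\int_0^t\mathbb{E}\big[T_k\mathbb{I}\{A_n(w)=2\}\big]\,dw$. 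You instead verify it analytically from the integral representation of Lemma~\ref{lemma:1} applied with $l=2$ and with $l=1$, combined with the first identity and Fubini; the computation is right. One caution: although the statement of Lemma~\ref{lemma:1} covers $1\le l\le k\le n$, its proof only derives the representation for $2\le l\le k\le n$ (and $q_{n1}(1;t)$ by definition), so the $l=1$, $k\ge 2$ case you lean on is not explicitly established there --- the paper's probabilistic argument for the second identity is, in effect, what justifies that case. To avoid any appearance of circularity, either adopt the paper's one-line disintegration or check the $l=1$ representation directly by the same Laplace-transform computation used in the proof of Lemma~\ref{lemma:1}. Your aside on the $k=l=2$ boundary interpretation is apt; what your route buys is uniformity (all three identities become Fubini manipulations of the same representations), while the paper's buys a self-contained probabilistic argument that does not invoke the $l=1$ case of Lemma~\ref{lemma:1}.
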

%
%
\begin{proof}
From Eq.~(\ref{identity:100}) with $l = 2$, 
$\int_0^t q_{n2}(x)dw  =  \int_0^t f_{n2}(x)dw
	 =  \mathbb{P}(T_n + \cdots +T_2 < t) 
	 =  \mathbb{P}(A_{n1}(t) = 1) 
	 =  q_{n1}(t)$.
Alternatively one can differentiate Eq.~(\ref{lod:0}) with $l = 1$, and set the constant of integration by noting that $q_{n1}(0) = 0$ for $n \ge 2$.  
For the second identity
$q_{n1}(k;t) = \mathbb{E}\big [T_kI\big \{A_n(t)=1\big \} \big ]
= \int_0^t\mathbb{E}\big [T_kI\big \{T_n+\cdots + T_2 \in (w,w+dw)\big \}\big ]
= \int_0^t\mathbb{E}\big [T_kI\big \{A_n(w)=2\big \}\big ]dw = \int_0^tq_{n2}(k;w)dw$.
The last identity is proved by noting that $\int_0^tq_{n2}(w)dw = q_{n1}(t)$, so $q_{n2}(t) = q^\prime_{n1}(t)$ and
$\int_0^t(t-w)q_{n2}(w)dw = \int_0^t(t-w)q^\prime_{n1}(w)dw = (t-w)q_{n1}(w)\Big ]_0^t + \int_0^tq_{n1}(w)dw = 0 + q_{n1}(1;t)$.
\qed
\end{proof}
%
%

%
%
\begin{corollary}\label{corollary:fixedt}
If $t << 1/\theta$ as $\theta \to 0$ the last two terms in Eq.~(\ref{config:3a}) become
\begin{eqnarray}
&&\sum_{a \in [d]}\mathbb{I}\{\bm{n}=n\bm{e}_a\}\Biggl (x_a\Big (
q_{n1}(t) - \frac{\theta}{2}(1-P_{aa})\sum_{k=1}^nkq_{n1}(k;t)
\Big ) + \frac{\theta}{2}\sum_{j\in [d], j \ne a}x_jP_{ja})\Biggr )
\nonumber \\
&&+\frac{\theta}{2}\sum_{a\ne b \in [d]}\mathbb{I}\{\bm{n}=n_a\bm{e}_a+n_b\bm{e}_b\}
\sum_{k=2}^nkq_{n1}(k;t)\big (p_{nk}(n_b)x_aP_{ab} + p_{nk}(n_a)x_bP_{ba}\big )
+ \mathcal{O}(\theta^2),
\nonumber \\
\label{xyz:500}	
\end{eqnarray}
where expressions in Eq.~(\ref{xyz:500}) are defined previously.
\end{corollary}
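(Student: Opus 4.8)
The plan is to start from the last two terms of Eq.~(\ref{config:3a})---the $l=1$ contributions---and to Taylor-expand the mutation transition matrix $e^{\gamma(t-w)}$ in $\theta$. Since $\gamma=\tfrac{\theta}{2}(P-I)$ and $t$ is held fixed, for every $w\in[0,t]$ one has $e^{\gamma(t-w)} = I + \tfrac{\theta}{2}(t-w)(P-I) + \mathcal{O}(\theta^2)$ uniformly in $w$, and hence $(\bm{x}e^{\gamma(t-w)})_a = x_a + \tfrac{\theta}{2}(t-w)\big(\bm{x}(P-I)\big)_a + \mathcal{O}(\theta^2)$. This is the analytic expression of the statement in the preceding paragraph that, for fixed $t$, the founder edge from the MRCA back to the origin carries at most one mutation, of probability weight $\mathcal{O}(\theta)$.

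For the single-type term $\sum_a\mathbb{I}\{\bm{n}=n\bm{e}_a\}\int_0^t(\bm{x}e^{\gamma(t-w)})_a\big(q_{n2}(w) - \tfrac{\theta}{2}(1-P_{aa})\sum_{k=2}^{n}kq_{n2}(k;w)\big)dw$ I would substitute this expansion, multiply out, and discard everything of order $\theta^2$, leaving the integrand $x_a q_{n2}(w) + \tfrac{\theta}{2}(t-w)\big(\bm{x}(P-I)\big)_a q_{n2}(w) - \tfrac{\theta}{2}x_a(1-P_{aa})\sum_{k=2}^{n}kq_{n2}(k;w)$. Integrating in $w$ and invoking the three identities of Lemma~\ref{lemma:2}, namely $\int_0^t q_{n2}(w)\,dw=q_{n1}(t)$, $\int_0^t (t-w)q_{n2}(w)\,dw=q_{n1}(1;t)$ and $\int_0^t q_{n2}(k;w)\,dw=q_{n1}(k;t)$, collapses all $w$-integrals onto the $q_{n1}$ quantities. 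A short rearrangement using $\big(\bm{x}(P-I)\big)_a + x_a(1-P_{aa}) = \sum_{j\ne a}x_j P_{ja}$ then merges the $k=1$ piece $q_{n1}(1;t)$ into the $k\ge 2$ sum and yields the first line of Eq.~(\ref{xyz:500}).

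The two-type term $\tfrac{\theta}{2}\sum_{a\ne b}\mathbb{I}\{\bm{n}=n_a\bm{e}_a+n_b\bm{e}_b\}\sum_{k=2}^{n}k\int_0^tq_{n2}(k;w)\big(p_{nk}(n_b)(\bm{x}e^{\gamma(t-w)})_aP_{ab}+p_{nk}(n_a)(\bm{x}e^{\gamma(t-w)})_bP_{ba}\big)dw$ is lighter, since it already carries an overall $\tfrac{\theta}{2}$: there it suffices to replace $(\bm{x}e^{\gamma(t-w)})_a$ and $(\bm{x}e^{\gamma(t-w)})_b$ by their leading values $x_a$ and $x_b$, the corrections contributing only at $\mathcal{O}(\theta^2)$, and to use $\int_0^t q_{n2}(k;w)\,dw=q_{n1}(k;t)$ once more. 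This produces the second line of Eq.~(\ref{xyz:500}). Adding the two contributions and absorbing the discarded pieces into the $\mathcal{O}(\theta^2)$ remainder finishes the proof.

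I expect the only delicate point to be the order bookkeeping in the single-type term: the $\mathcal{O}(\theta)$ correction arising from $e^{\gamma(t-w)}$ sits at exactly the same order as the explicit $\tfrac{\theta}{2}$-term already present inside the integrand, so the two must be combined rather than treated separately, and it is precisely the identity relating $\big(\bm{x}(P-I)\big)_a$, the $x_a(1-P_{aa})q_{n1}(1;t)$ piece, and the $k=1$ term of the $q_{n1}(k;t)$ sum that makes the final answer collapse to the compact form stated. Everything else is a routine order count, using only that $q_{n1}(t)$, the $q_{n1}(k;t)$ and the $p_{nk}(\cdot)$ are $\mathcal{O}(1)$ for fixed $t$.
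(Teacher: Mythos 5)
Your proposal is correct and takes essentially the same route as the paper's own proof: expand $\bm{x}e^{\gamma(t-w)}=\bm{x}\big(I+\gamma(t-w)\big)+\mathcal{O}(\theta^2)$ inside the two $l=1$ terms, integrate, and collapse the $w$-integrals with the identities of Lemma~\ref{lemma:2}, the paper merely leaving your final rearrangement step implicit. Note that your calculation correctly produces the factor $q_{n1}(1;t)$ multiplying $\frac{\theta}{2}\sum_{j\neq a}x_jP_{ja}$ in the first line of Eq.~(\ref{xyz:500}), a factor that appears to have been dropped typographically in the displayed statement.
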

%
%
\begin{proof}
In Eqs.~(\ref{one:11}) and  (\ref{two:22}) expand 
$\bm{x}e^{\gamma(t-w)} = \bm{x}\big (I + \gamma(t-w)\Big ) + \mathcal{O}(\theta^2)$.
Then  Eq.~(\ref{one:11}) becomes 
\begin{equation*}
\int_0^t \Bigl (x_a + \frac{\theta}{2}\sum_{j\in [d]}x_j(P_{ja}-\delta_{ja})(t-w)\Bigr )
\Bigl (q_{n2}(w) - \frac{\theta}{2}(1-P_{aa})\sum_{k=2}^nkq_{n2}(k;w)\Bigr )dw+\mathcal{O}(\theta^2),
\end{equation*}
which simplifies to Eq.~(\ref{xyz:500}) after using the identities in Lemma~\ref{lemma:2}.  
\qed
\end{proof}
%
%

%
%
\begin{corollary} \label{corollary:4}
Suppose the initial ancestor types are all $a$, so $\bm{x}=\bm{e}_a$. 
If $t<<1/\theta$ the probability of a sample configuration where all individuals are type $a$ is
\begin{eqnarray}
p(n\bm{e}_a;t,\bm{e}_a) &=& 
 1 - \frac{\theta}{2}(1-P_{aa})\sum_{1 \leq l \leq k \leq n} kq_{nl}(k;t)
 +\mathcal{O}(\theta^2);
\label{config:1000}
\end{eqnarray}
the probability of a sample configuration with two types $a$ and $b$ is
\begin{eqnarray}
p(n_a\bm{e}_a+n_b\bm{e}_b;t,\bm{e}_a) =
\frac{\theta}{2}P_{ab}\sum_{l=1}^n\sum_{k\geq l,2}^nkq_{nl}(k;t)p_{nk}(n_b) + \mathcal{O}(\theta^2);
\label{config:00a}
\end{eqnarray}
and the probability of a sample configuration with greater than two types is of 
$\mathcal{O}(\theta^2)$.
\end{corollary}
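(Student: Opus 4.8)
The plan is to specialise Corollary~\ref{corollary:x} --- which already gives the $\mathcal{O}(\theta)$ sampling probabilities for initial condition $\bm{x}=\bm{e}_a$ and arbitrary $t$ --- to the regime $t\ll 1/\theta$, i.e.\ $t$ held fixed while $\theta\to0$. Two ingredients drive the reduction. First, the Taylor expansion of the mutation transition matrix, $e^{\gamma(t-w)}=I+\tfrac{\theta}{2}(t-w)(P-I)+\mathcal{O}(\theta^2)$, valid uniformly for $w\in[0,t]$ once $t$ is fixed; in particular $(e^{\gamma(t-w)})_{aa}=1-\tfrac{\theta}{2}(1-P_{aa})(t-w)+\mathcal{O}(\theta^2)$ and $(e^{\gamma(t-w)})_{ab}=\tfrac{\theta}{2}(t-w)P_{ab}+\mathcal{O}(\theta^2)$ for $b\ne a$. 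Second, the three integral identities of Lemma~\ref{lemma:2}: $\int_0^t q_{n2}(w)\,dw=q_{n1}(t)$, $\int_0^t q_{n2}(k;w)\,dw=q_{n1}(k;t)$, and $\int_0^t (t-w)q_{n2}(w)\,dw=q_{n1}(1;t)$. I would first note that once $t$ is fixed every coalescent edge length contributing to the part of the sample tree below time $t$ is bounded, so the $o(\theta)$ remainders carried through Theorem~\ref{theorem:1} and Corollary~\ref{corollary:x} --- written that way only to cover time scales $t=\mathcal{O}(1/\theta)$ --- are genuinely $\mathcal{O}(\theta^2)$ here, which is what licenses the $\mathcal{O}(\theta^2)$ remainders in the statement.

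For $\bm{n}=n\bm{e}_a$ I start from Eq.~(\ref{config:1000x}). In the $l=1$ integral term I insert the expansion of $(e^{\gamma(t-w)})_{aa}$; the cross term between its $\mathcal{O}(\theta)$ part and the $\tfrac{\theta}{2}(1-P_{aa})\sum_{k=2}^n k\,q_{n2}(k;w)$ bracket is $\mathcal{O}(\theta^2)$ and is dropped, leaving $\int_0^t q_{n2}(w)\,dw-\tfrac{\theta}{2}(1-P_{aa})\int_0^t(t-w)q_{n2}(w)\,dw-\tfrac{\theta}{2}(1-P_{aa})\sum_{k=2}^n k\int_0^t q_{n2}(k;w)\,dw$. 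By Lemma~\ref{lemma:2} this equals $q_{n1}(t)-\tfrac{\theta}{2}(1-P_{aa})(q_{n1}(1;t)+\sum_{k=2}^n k\,q_{n1}(k;t))$. Adding the $l\ge2$ part $1-q_{n1}(t)-\tfrac{\theta}{2}(1-P_{aa})\sum_{2\le l\le k\le n}k\,q_{nl}(k;t)$, the two occurrences of $q_{n1}(t)$ cancel, and the three remaining sums reassemble into $\sum_{1\le l\le k\le n}k\,q_{nl}(k;t)$: the term $q_{n1}(1;t)$ supplies $(l,k)=(1,1)$ with its coefficient $k=1$, $\sum_{k=2}^n k\,q_{n1}(k;t)$ supplies the remaining $l=1$ terms, and $\sum_{2\le l\le k\le n}$ supplies $l\ge2$. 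This gives Eq.~(\ref{config:1000}).

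For $\bm{n}=n_a\bm{e}_a+n_b\bm{e}_b$ with $b\ne a$ I start from Eq.~(\ref{config:00x}). In the term carrying $(e^{\gamma(t-w)})_{aa}$ I replace this entry by $1$ (the correction is $\mathcal{O}(\theta^2)$ because of the leading $\tfrac{\theta}{2}$ factor) and apply $\int_0^t q_{n2}(k;w)\,dw=q_{n1}(k;t)$; the term carrying the off-diagonal $(e^{\gamma(t-w)})_{ab}$ is $\mathcal{O}(\theta)\cdot\tfrac{\theta}{2}=\mathcal{O}(\theta^2)$ and disappears. What remains is $\tfrac{\theta}{2}P_{ab}(\sum_{2\le l\le k\le n}k\,q_{nl}(k;t)p_{nk}(n_b)+\sum_{k=2}^n k\,q_{n1}(k;t)p_{nk}(n_b))$, which is Eq.~(\ref{config:00a}) once the two sums are merged into $\sum_{l=1}^n\sum_{k\ge l,2}^n$. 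For a configuration $n_b\bm{e}_b+n_c\bm{e}_c$ with $b,c\ne a$, every surviving term in Eq.~(\ref{config:00xa}) carries an off-diagonal entry $(e^{\gamma(t-w)})_{ab}$ or $(e^{\gamma(t-w)})_{ac}$, each $\mathcal{O}(\theta)$, together with an explicit $\tfrac{\theta}{2}$, so the whole probability is $\mathcal{O}(\theta^2)$; and any configuration with more than two distinct types requires at least two mutation events along the tree back to time $0$, each contributing a factor $\mathcal{O}(\theta)$ once $t$ is fixed, so its probability is also $\mathcal{O}(\theta^2)$, sharpening the $o(\theta)$ statement already in Corollary~\ref{corollary:x}.

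The only step that needs genuine care is the index bookkeeping in the $\bm{n}=n\bm{e}_a$ case: one must see that the linear correction from $(e^{\gamma(t-w)})_{aa}$, once integrated against $q_{n2}$, produces exactly the single missing summand $q_{n1}(1;t)$ (the $j=k=1$ quantity of Lemma~\ref{lemma:1}), so that the $l\ge2$ sums of Corollary~\ref{corollary:x} together with the $l=1$ contributions close up into the one sum $\sum_{1\le l\le k\le n}$. Everything else --- the expansion of $e^{\gamma(t-w)}$, the discarding of $\mathcal{O}(\theta^2)$ products, and the application of Lemma~\ref{lemma:2} --- is routine, as is the remark upgrading the remainders from $o(\theta)$ to $\mathcal{O}(\theta^2)$ for fixed $t$.
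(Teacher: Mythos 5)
Your proposal is correct and follows essentially the same route as the paper's own proof: specialise Corollary~\ref{corollary:x}, expand $e^{\gamma(t-w)}=I+\tfrac{\theta}{2}(t-w)(P-I)+\mathcal{O}(\theta^2)$, apply the identities of Lemma~\ref{lemma:2} (with $\int_0^t(t-w)q_{n2}(w)\,dw=q_{n1}(1;t)$ supplying the $(l,k)=(1,1)$ summand), and merge the $l=1$ and $l\ge2$ sums. Your explicit remark on the negligibility of the off-diagonal $(e^{\gamma(t-w)})_{ab}$ terms and of configurations with more than two types matches the paper's argument, so no further comment is needed.
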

%
%
\begin{proof}
The last line of Eq.~(\ref{config:1000x}) is equal to
\begin{eqnarray*}
&&q_{n1}(t) + \frac{\theta}{2}(P_{aa}-1)\int_0^t(t-w)q_{n2}(w)dw
- \frac{\theta}{2}(1-P_{aa})\sum_{k=2}^nkq_{n1}(k;t) + \mathcal{O}(\theta^2)
\nonumber \\
&&~= q_{n1}(t) - 	\frac{\theta}{2}(1-P_{aa})\sum_{k=1}^nkq_{n1}(k;t) + \mathcal{O}(\theta^2).
\end{eqnarray*}
Now add the first line of Eq.~(\ref{config:1000x}) to obtain Eq.~(\ref{config:1000}).
The probability of a configuration $\bm{n}=n_a\bm{e}_a+n_b\bm{e}_b$, Eq.~(\ref{config:00a}), is found by noting that 
$
(e^{\gamma(t-w)})_{ab} =\delta_{ab} + \mathcal{O}(\theta)
$
so the second last line of Eq.~(\ref{config:00x}) is 
\begin{equation}
\frac{\theta}{2}P_{ab}\sum_{k=2}^nkq_{n1}(k;t) p_{nk}(n_b)+\mathcal{O}(\theta^2)	
\label{config:xyz}
\end{equation}
and the last line is $\mathcal{O}(\theta^2)$. Then Eq.~(\ref{config:00a}) follows by adding Eq.~(\ref{config:xyz}) 
and the first line of Eq.~(\ref{config:00x}).   
\qed
\end{proof}
\section{The infinitely-many-sites model and gene trees}
In the infinitely-many-sites model genes are sequences. Mutations occur at positions never before mutant in the population of sequences. The configuration of mutations at segregating sites in a sample is equivalent to a unique gene tree, called a perfect phylogeny in biology. A gene tree of $n$ sequences of $g\leq d$ different types is constructed by labeling mutations in the coalescent tree of the $n$ genes, then describing each gene by its mutation path $\bm{z}$ from leaves to the root of the tree with multiplicities of the types $\bm{n}=(n_1,\ldots,n_g)$. The same ideas used in Theorem 1 lead to an approximate probability for a gene tree when $\theta$ is small. It is natural to consider gene trees because they are closely related to coalescent trees. The gene tree does not depend on the actual values of the labels, only their pattern.
This is a very brief account; the reader is referred to \citet{EG1997,G1999,GT1999} for more details.
An illustration of a coalescent tree showing these concepts is in Fig. \ref{fig:ctree}.
\medskip

\begin{figure}
\begin{center}
\caption{Example coalescent tree}\label{fig:ctree}
\bigskip

\begin{tikzpicture}[xscale=0.7,yscale = 0.3]
\draw (0,0) -- (0,4);
\draw (1,0) -- (1,4);
\draw (2,0) -- (2,6);
\draw (0.5,4)  -- (0.5,6);
\draw (1.25,6) -- (1.25,7);
\draw (3,0) -- (3,7);
\draw (4,0) -- (4,7);
\draw (5,0) -- (5,2);
\draw (6,0) -- (6,2);
\draw (5.5,2) -- (5.5,7);
\draw (7,0)-- (7,5);
\draw (8,0) -- (8,5);
\draw (7.5,5) -- (7.5,7);
\draw [dashed] (1.25,7) -- (1.25,13);
\draw [dashed] (3,7) -- (3,8);
\draw [dashed] (4,7) -- (4,8);
\draw [dashed] (3.5,8) -- (3.5,10);
\draw [dashed] (5.5,7) -- (5.5,10);
\draw [dashed] (4.5,10) -- (4.5,13);
\draw [dashed] (2.75,13) -- (2.75,15);
\draw [dashed] (7.5,7) -- (7.5,15);
\draw [dashed] (5.125,15) -- (5.125,17);
\draw (0,4) -- (1,4);
\draw (0.5,6) -- (2,6) ;
\draw (5,2) -- (6,2);
\draw (7,5) -- (8,5);
\draw [dashed] (1.25,13) -- (4.5,13);
\draw [dashed] (3.5,10) -- (5.5,10);
\draw [dashed] (2.75,15) -- (7.5,15);
\draw [dashed] (3,8) -- (4,8);
\draw -- (10,7) node{time $0$};
\draw -- (10,0) node {time t};
\draw -- (5.125,16)  node {$\bullet$};
\draw -- (5.625,16) node {$z_3$};
\draw -- (4.5,11)  node {$\bullet$};
\draw -- (5,11) node {$z_2$};
\draw -- (5.5,3)  node {$\bullet$};
\draw -- (6,3) node {$z_0$};
\draw -- (1.25,10)  node {$\bullet$};
\draw -- (1.75,10) node {$z_1$};
\end{tikzpicture}
\end{center}
\bigskip

The initial coalescent tree at time zero is shown as the dashed subtree. The full coalescent tree at time $t$ is the tree with dashed and non-dashed lines.
The initial gene tree with $l=5$ founder lineages is
$
T(\bm{l}) = \big ( (z_1,z_3),(z_2,z_3),(z_3)\big )
$
with multiplicities of the sequences $\bm{l}=(1,3,1)$. There is a mutation labeled $z_0$ in $(0,t)$ and the gene tree at time $t$
is
$
T = \big ((z_1,z_3),(z_2,z_3), (z_3), (z_0,z_2,z_3)\big )
$
in a sample of $n=9$ with multiplicities $\bm{n}=(3,2,2,2)$.
%
\end{figure}
\bigskip

An analog of the result in Theorem 1 for gene trees is
\begin{eqnarray}
p\big ((T,\bm{n});t,\bm{x}\big )&=&
\sum_{2\leq l \leq n;|\bm{l}|=l;T(\bm{l}) = T}r_{\bm{n}\mid \bm{l}}m(\bm{l};l,\bm{x})
\Bigg (q_{nl}(t) - \frac{\theta}{2}\sum_{l \leq k \leq n} kq_{nl}(k;t)\Bigg )
\nonumber \\
&&+ \frac{\theta}{2}
\sum_{2\leq l \leq k \leq n;|\bm{l}|=l}
 q_{nl}(k;t)m(\bm{l};l,\bm{x})
 \nonumber \\
&&~~~~~~~~~~~~~~~~\times
\Bigg (\sum_{\bm{k}, |\bm{k}| = k; j\in [g];T_j(\bm{l})=T}
k_jr_{\bm{k}\mid\bm{l}}r_{\bm{n}\mid \bm{k}-\bm{e}_j + \bm{e}_*}\Bigg )
\nonumber \\&&+
\mathbb{I}\{T=\big((z_0)\big );\bm{n}=n\bm{e}_1\}
\Big (q_{n1}(t) - \frac{\theta}{2}\sum_{k=2}^nkq_{n1}(k;t)\Big )
\nonumber \\
&&+\frac{\theta}{2}\mathbb{I}\{T= \big ((z_1),(z_0,z_1)\big ),\bm{n}=n_1\bm{e}_1+n_2\bm{e}_2\}
\nonumber \\
&&~~~\times\sum_{k=2}^nkp_{nk}(n_2)q_{n1}(k;t)
+ o(\theta).
\label{config:gt}
\end{eqnarray}
The initial population of sequences has frequencies $\bm{x}$ which may have an infinite number of entries. In the labeling of $\bm{l}$ sample sequences only the non-zero number of sequences are included and relabeled.

The first line of (\ref{config:gt}) corresponds to the case where no mutation occurs in $(0,t)$.  In this case the gene trees at time zero and time $t$, specified as above in terms of paths to the 
root, are equal, that is $T(\bm{l}) = T$.  However the trees have different multiplicities $\bm{l}$ and $\bm{n}$, and the argument leading to Eq.(\ref{config:11}) applies. 
In the second line one mutation occurs adding a new singleton sequence to the tree.
$T_j(\bm{l})$ is the gene tree $T(\bm{l})$ modified by mutation to one of the $j$th sequences having an additional mutation label added. In the sum over $\bf{k}$, there are two cases to consider;
if $k_j>1$ there are still representatives of the $j$th sequence; and if $k_j=1$ the original sequence is removed. The new sequence is a singleton. Its position, denoted by $*$ is either at the end of the sequences, or in the position of the prior $j$th sequence which was removed in the second case. 

The third and fourth lines correspond to cases where the coalescent tree has a MRCA in $(0, t)$.   Then either the gene tree is a single sequence $\big ( (z_0)\big )$ of $n$ copies 
at $t$ if there is no further mutation, or there is a mutation and then the gene tree at $t$ is  $T= \big ((z_1),(z_0,z_1)\big )$ with multiplicities $(n_1,n_2)$.
In the third line $T= \big ( (z_0)\big )$ is a tree which is the same type as a single gene chosen from the initial sequences $\bm{x}$.
In the fourth line $T= \big ((z_1),(z_0,z_1)\big )$ does not distinguish whether or not $z_1$ is from the initial population, or a new mutation.  If there is no mutation along the single lineage it will be from the initial population, or if there is mutation it will denote a new mutation. To split the terms up into these two cases replace $q_{n1}(k;t)$
by $\int_0^tq_{n2}(k;w)e^{-\frac{\theta}{2}(t-w)}dw$ for no mutation and $\int_0^tq_{n2}(k;w)\big (1-e^{-\frac{\theta}{2}(t-w)}\big )dw$ if mutation occurs.

The infinitely-many-alleles model is a model of gene frequencies where mutation always produces a new allele type. The distribution of the configuration of types in a sample of $n$ genes is the Ewens' sampling formula \citep{E1972}. The infinitely-many-alleles model is contained in the infinitely-many-sites model by considering the first co-ordinates of the sequences in the gene trees.  The two models can be thought of at a macroscopic and microscopic level of detail in DNA sequences. There is a similar equation to (\ref{config:gt}) for the infinitely-many-alleles model, keeping the types labeled for convenience.

%
\section{An approximate transition density for large sample sizes}
Denote by $f(\bm{x},\bm{y};t,\theta)$ the transition density of the Wright-Fisher diffusion process $\{\bm{X}(t)\}_{t\geq 0}$ with initial frequencies $\bm{X}(0) = \bm{x}$ 
and final frequencies $\bm{X}(t) = \bm{y}$. 
The transition density $f(\bm{x},\bm{y},t) \equiv f(\bm{x},\bm{y};t,0)$ in a model with no mutation. 
For functions $g(\bm{X}(t))$ with finite expectation with $\theta \geq 0$ 
\[
\lim_{\theta \to 0}\mathbb{E}^\theta_{\bm{x}}\big [g(\bm{X}(t))\big ] = \mathbb{E}^0_{\bm{x}}\big [g(\bm{X}(t))\big ],
\]
where $\mathbb{E}^\theta_{\bm{x}}$ denotes expectation with parameter $\theta$ and initial frequencies $\bm{x}$.

An expression for the transition density as a mixture of Dirichlet distributions is known from \citet{EG1993} and discussed in the review of \citet{GS2010}.  A description of the structure of the population at $t$ and this transition density is now given.  The population at time $t$ coalesces back to a random number of lineages $l$ at time 0, with probability $q_{\infty l}(t)$. The population is composed of families of individuals from these $l$ founders, with family sizes having an $l$ dimensional Dirichlet distribution with parameters $(1,\cdots,1)$. Family types are determined by the type of the founder lineages, so there is a $d$-dimensional $m(\bm{l},l,\bm{x})$ distribution of the accumulated number of families with types in $[d]$. The probability that in a subset $A \subseteq [d]$, $X_j(t) >0$ for $j\in A$ and $X_j(t)=0$ for $j \not\in A$ is the probability the set of founder lineage types is $A$.
The transition density is then
\begin{equation}
f(\bm{x},\bm{y};t) = \sum_{l=1}^\infty q_{\infty l}(t)
\sum_{|\bm{l}|=l}m(\bm{l},l,\bm{x}){\cal D}_{\bm{l}}(\bm{y}),
\label{transition:0}
\end{equation}
with the convention in the Dirichlet distribution that $y_j=0$ if $l_j=0$.
That is
\[
{\cal D}_{\bm{l}}(\bm{y})=\frac{\Gamma(|\bm{l}|)}{\prod_{j:l_j>0}\Gamma(l_j)}\prod_{j:l_j>0}y_j^{l_j-1}.
\]

 Although the first summation is formally written beginning at 1, because of the convention that all the terms with $l < |A|$ are zero the summation really begins with $l=|A|$.
 See \citet{EG1993} Eq.~(1.27) for a more general expansion which includes Eq.~(\ref{transition:0}) when their $\theta=0$.
A first example: if $A=\{1,2\}$, so $y_1>0,y_2>0$ and $y_1+y_2=1$, the transition density is
\begin{equation}
f\big (\bm{x},(y_1,y_2,\bm{0});t \big )
= \sum_{l=2}^\infty 
q_{\infty l}(t)
\sum_{l_1,l_2>1:l_1+l_2=l}m((l_1,l_2,\bm{0}),l,\bm{x})
\frac{\Gamma (l)}{\Gamma (l_1)\Gamma (l_2)}y_1^{l_1-1}y_2^{l_2-1},
\end{equation}
which is the joint density of $X_1(t),X_2(t)$ and the probability that $X_j(t) = 0, j > 2$. A second example: if $\bm{y}=\bm{e}_1$ the probability the population is fixed at time $t$ with type 1 individuals is the probability that all founder lineages are of type 1,
\[
f\big (\bm{x},\bm{e}_1;t \big )
= \sum_{l=1}^\infty q_{\infty l}(t)x_1^l.
\]

It would be ideal to find an expansion of $f(\bm{x},\bm{y};t,\theta)$ in terms of $f(\bm{x},\bm{y};t,0)$ to the first order in $\theta$ similar to the sample transition distribution, however for fixed $\theta$ the probability of greater than one mutation in an infinite-leaf coalescent tree is one, not of order $\theta^2$. 
The error for a probability configuration in a sample of $n$ by omitting the probability of greater than one mutation (denoted by $Q_1$) is bounded by the probability of greater than one mutation in an unconstrained coalescent tree.

This probability is exactly
\begin{eqnarray}
Q_1&=&\mathbb{E}\Big [1 - e^{-(1/2)\theta\sum_{l=2}^nlT_l}
\Big ( 1 + (\theta/2)\sum_{k=2}^nkT_k\Big )\Big ]
\nonumber \\
&=&1 - \prod_{l=1}^{n-1}\Big ( 1 + \frac{\theta}{l}\Big )^{-1}
\Big ( 1 + \theta\sum_{k=1}^{n-1}\frac{1}{k}\Big ( 1 + \frac{\theta}{k}\Big )^{-1}\Big )
\nonumber \\
&=&1 - \prod_{l=1}^{n-1}\Big ( 1 + \frac{\theta}{l}\Big )^{-1}
\Big ( 1 + \theta\sum_{k=1}^{n-1}\frac{1}{k+\theta}\Big )
\nonumber \\
&=& -\Bigg (\prod_{l=1}^{n-1}\Big ( 1 + \frac{\theta}{l}\Big )^{-1} - 1 + \theta\sum_{k=1}^{n-1}\frac{1}{k}\Bigg )\Big (1 + \theta\sum_{k=1}^{n-1}\frac{1}{k+\theta}\Big )
\nonumber \\
&&~~~~~+ \theta^2\Bigg(\sum_{k=1}^{n-1}\frac{1}{k(k+\theta)} + \sum_{k,l=1}^{n-1}\frac{1}{k(l+\theta)}\Bigg )
\nonumber \\
&=& -\Bigg (\theta\frac{(n-1)!\Gamma(\theta)}{\Gamma(n+\theta)} 
- 1 + \theta\sum_{k=1}^{n-1}\frac{1}{k}\Bigg )\Big (1 + \theta\sum_{k=1}^{n-1}\frac{1}{k+\theta}\Big )
\nonumber \\
&&~~~~~+ \theta^2\Bigg(\sum_{k=1}^{n-1}\frac{1}{k(k+\theta)} + \sum_{k,l=1}^{n-1}\frac{1}{k(l+\theta)}\Bigg )
\label{maxbound:0}
\end{eqnarray}
Both terms in (\ref{maxbound:0}) are of order $\theta^2$. To see this for the first term note that
\begin{eqnarray*}
\prod_{l=1}^{n-1}\Big ( 1 + \frac{\theta}{l}\Big )^{-1}
& =& 
\prod_{l=1}^{n-1}\Big (1 - \frac{\theta}{l} + \mathcal{O}(\theta^2)\Big )\\
&=&
1 - \theta\sum_{k=1}^{n-1}\frac{1}{k} + \mathcal{O}(\theta^2).
\end{eqnarray*}
Alternatively use the properties of the gamma function that as $\theta \to 0$
\begin{eqnarray*}
\Gamma(\theta) &=& \frac{1}{\theta} - \gamma + \mathcal{O}(\theta)\\
\Gamma(\theta+n) &=& \big (1 + \theta\psi(n)\big)(n-1)! + \mathcal{O}(\theta^2),
\end{eqnarray*}
where $\gamma$ is Euler's constant and $\psi(\cdot)$ is the diagamma function
with
\[
\psi(n) = \sum_{k=1}^{n-1}\frac{1}{k} - \gamma
\]
to obtain the result.

As $n\to \infty$ and $\theta\to 0$, there are three possible limits for $Q_1$, seen from the second line of (\ref{maxbound:0});
\begin{equation}
\lim_{\theta\log n \to 0}Q_1=0, \>
\lim_{\theta\log n \to \alpha}Q_1=1 - e^{-\alpha}(1+\alpha),\>
\lim_{\theta\log n \to \infty}Q_1= 1.
\label{Q1forAlpha}
\end{equation}
If $n\to \infty$ and $\theta \to 0$ such that $\theta\log n\to 0$ then the probability of greater than one mutation in an $n$-coalescent tree is $o(\theta)$; we take this approach and find an approximate transition density $f_n(\bm{x},\bm{y};t,\theta)$.
\begin{theorem}\label{theorem:3}
An approximate transition density in the general mutation Wright-Fisher diffusion from the sample transition distribution as $n\to \infty$ and $\theta \to 0$ such that $\theta \log n\to 0$, and
$\bm{n}/n \to \bm{y}$, with initial relative frequencies $\bm{x}$ is given by
\begin{eqnarray}
&&f_n(\bm{x},\bm{y};t,\theta) =
\sum_{2\leq l ;|\bm{l}|=l}m(\bm{l};l,\bm{x}){\cal D}_{\bm{l}}(\bm{y})
\Bigg (q_{n l}(t) - \frac{\theta}{2}\sum_{k=l}^n kq_{nl}(k;t)\Bigg )
\nonumber \\
&&+ \frac{\theta}{2}
\sum_{2\leq l \leq k \leq n;|\bm{l}|=l}
 q_{nl}(k;t)m(\bm{l};l,\bm{x})
 \nonumber \\
&&\times\Bigg (\sum_{i\ne j\in [d]; k_i,k_j; k_i+k_j \leq k}k_jP_{ji}
\alpha_{ij}(k,k_i,k_j\mid l,l_i,l_j)
\nonumber \\
&&~~~~\times{\cal D}_{ (k\backslash ij,k_i+1,k_j-1)}(1-y_i-y_j,y_i,y_j)
\cdot {\cal D}_{\bm{l}\backslash ij}(\bm{y}\backslash ij)
\nonumber \\
&&~~~~+ \sum_{j\in [d]; k_j \leq k}k_jP_{jj}
\alpha_{j}(k,k_j\mid l,l_j)\cdot
{\cal D}_{(k-k_j,k_j)}(1-y_j,y_j){\cal D}_{\bm{l}\backslash j}(\bm{y}\backslash j)\Bigg)
\nonumber \\
&&+\sum_{a \in [d]}\mathbb{I}\{\bm{y}=\bm{e}_a\}
\int_0^t(\bm{x}e^{\gamma (t-w)})_a \Big (q_{n 2}(w) - \frac{\theta}{2}(1-P_{aa})\sum_{k=2}^n kq_{n2}(k;w)\Big )dw
\nonumber \\
&&+\frac{\theta}{2}\sum_{a\ne b \in [d]}\mathbb{I}\{\bm{y}=y_a\bm{e}_a+y_b\bm{e}_b\}
\nonumber \\
&&~~~\times\sum_{k=2}^n k(k-1)\int_0^tq_{n 2}(k;w)\Big ((1-y_b)^{k-2}(\bm{x}e^{\gamma (t-w)})_a
P_{ab}  
+ a \leftrightarrow b \Big ) dw
+ o(\theta).
\nonumber \\
\label{config:300}
\end{eqnarray}
\end{theorem}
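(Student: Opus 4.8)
The plan is to read off (\ref{config:300}) from the finite-sample formula (\ref{config:3a}) of Theorem~\ref{theorem:1} by a de Finetti / local-limit argument, letting $n\to\infty$, $\theta\to0$ with $\theta\log n\to0$, and $\bm{n}/n\to\bm{y}$. The link between the sampling distribution and the transition density is the classical identity $p(\bm{n};t,\bm{x})=\mathbb{E}_{\bm{x}}\big[m(\bm{n};n,\bm{X}(t))\big]=\int m(\bm{n};n,\bm{y})\,f(\bm{x},\bm{y};t,\theta)\,d\bm{y}$, in which the multinomial kernel $m(\bm{n};n,\cdot)$ of (\ref{multinomial}), regarded as a function of $\bm{y}$, concentrates as $\bm{n}/n\to\bm{y}$ onto an $O(1/n)$ neighbourhood of the face of $\Delta_d$ carrying the support $S$ of $\bm{y}$ (say $|S|=g$). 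Since $\int m(\bm{n};n,\bm{y})\,d\bm{y}=n!/(n+g-1)!\sim n^{-(g-1)}$ in the natural simplex coordinates, the component of $f(\bm{x},\bm{y};t,\theta)$ on that face is recovered as the limit of $n^{g-1}p(\bm{n};t,\bm{x})$; a vertex $\bm{e}_a$ is the case $g=1$ and $p(n\bm{e}_a;t,\bm{x})=\mathbb{E}_{\bm{x}}[X_a(t)^n]$ needs no rescaling. One thus \emph{defines} $f_n(\bm{x},\bm{y};t,\theta)$ to be $n^{g-1}p(\bm{n};t,\bm{x})$ truncated at $o(\theta)$, and the theorem becomes the assertion that pushing this rescaling through (\ref{config:3a}) reproduces (\ref{config:300}) block by block.

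The first point to secure is that (\ref{config:3a}) keeps an $o(\theta)$ remainder in the \emph{joint} limit, not just for each fixed $n$. What Theorem~\ref{theorem:1} discards beyond the written terms is the event of two or more mutations strictly between the sample leaves and their MRCA; mutations on the single founder edge above the MRCA are kept exactly through $e^{\gamma(t-w)}$. Conditionally on the coalescent tree the number of discarded mutations is Poisson with mean $\tfrac{\theta}{2}$ times the branch length below the MRCA, which never exceeds $\sum_{k=2}^n kT_k$, the length of the full unconstrained $n$-coalescent; hence the discarded probability is at most $Q_1$ of (\ref{maxbound:0}). By (\ref{Q1forAlpha}), $Q_1=o(\theta)$ precisely when $\theta\log n\to0$, which is the hypothesis. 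The same monotone comparison handles the analogous two-mutation events inside the $l\ge2$ founder subtrees and on the two sub-MRCA edges when $l=1$, so (\ref{config:3a}) holds with a genuine joint $o(\theta)$ error along the prescribed path.

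Next I would pass to the limit block by block. The $O(1)$ coalescent factors $q_{nl}(t)$ and $q_{nl}(k;t)$ are kept as written (they have finite limits by the remark after (\ref{lod:0}) and Lemma~\ref{lemma:1}, so their exact $n$ is immaterial to leading order). The combinatorial factors that vanish as $n\to\infty$ --- those whose Polya-urn argument sums grow like $n$ --- are $r_{\bm{n}\mid\bm{l}}={\cal DM}_{\bm{l},n-l}(\bm{n}-\bm{l})$ of (\ref{reqn:0}), the factors $\alpha_{ij}(n,\ldots)$ and $\alpha_j(n,\ldots)$, and $p_{nk}(n_b)$ of (\ref{pnkc:0}); by $\Gamma$-function asymptotics, ${\cal DM}_{\bm{\theta},N}(\bm{m})$ with bounded $\bm{\theta}$ and $\bm{m}/N\to\bm{z}$ satisfies $N^{h-1}{\cal DM}_{\bm{\theta},N}(\bm{m})\to{\cal D}_{\bm{\theta}}(\bm{z})$, $h$ the number of positive entries of $\bm{\theta}$, and in particular $n\,p_{nk}(n_b)\to(k-1)(1-y_b)^{k-2}$ --- this last being the source of the factor $k(k-1)(1-y_b)^{k-2}$ in the final block of (\ref{config:300}), one $k$ coming from the mutation rate $\tfrac{\theta}{2}kT_k$. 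The bookkeeping to verify is that, in each block of (\ref{config:3a}), the powers of $n$ carried by the several vanishing prefactors sum to exactly $-(g-1)$ for the value of $g$ relevant to that block (full $g$ in the no-mutation block, the number of types present in the one-mutation block, $g=2$ in the two-type block, $g=1$ in the fixation block), so that multiplying by $n^{g-1}$ leaves precisely the products of Dirichlet densities ${\cal D}_{\bm{l}}$, ${\cal D}_{(k\backslash ij,k_i+1,k_j-1)}$, ${\cal D}_{(k-k_j,k_j)}$ shown in (\ref{config:300}), these densities being combined on the nested subsimplices with the usual change-of-variable factors. Term-by-term convergence is promoted to convergence of the (finite but lengthening) sums by the exponential decay of $q_{nl}(t)$ in $l$, which makes the large-$l$ tails uniformly negligible. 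Two sanity checks complete the picture: at $\theta=0$ the formula collapses to the exact no-mutation transition density (\ref{transition:0}), and as $t\to\infty$ with $\bm{x}e^{t\gamma}\to\bm{\pi}$ it reproduces the stationary sampling distribution of Corollary~\ref{corollary:1}.

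The main obstacle is the coupling of these two steps. Theorem~\ref{theorem:1} is an expansion ``as $\theta\to0$'' with an $n$-dependent error, so one must control that error along the path $\theta\log n\to0$ --- the exact expression (\ref{maxbound:0}) for $Q_1$ and its trichotomy (\ref{Q1forAlpha}) are the quantitative lever --- while simultaneously applying the Dirichlet--multinomial local limits to $r_{\bm{n}\mid\bm{l}}$, $\alpha_{ij}$, $\alpha_j$, $p_{nk}$ uniformly over the founder configurations $\bm{l}$ and edge configurations $\bm{k}$ appearing in the sums, so that the block-by-block limit may be taken inside. Once the accounting of powers of $n$ is nailed down, the remaining work is routine $\Gamma$-function asymptotics.
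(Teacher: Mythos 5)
Your outline is essentially sound, but it reaches the result by a genuinely different mechanism than the paper. You extract the density as a local limit, $f = \lim n^{g-1}p(\bm{n};t,\bm{x})$ with $\bm{n}/n\to\bm{y}$, using $\Gamma$-function asymptotics of the Dirichlet--multinomial factors ($N^{h-1}{\cal DM}_{\bm{\theta},N}(\bm{m})\to{\cal D}_{\bm{\theta}}(\bm{z})$, $n\,p_{nk}(n_b)\to(k-1)(1-y_b)^{k-2}$), which then forces the block-by-block bookkeeping of powers of $n$, uniformity over the configurations $\bm{l},\bm{k}$, and tail control in $l$ that you flag as the main obstacle. The paper avoids all of that: it writes each combinatorial coefficient in Eq.~(\ref{config:3a}) \emph{exactly} as a mixture of multinomial (or binomial) kernels against the Dirichlet densities appearing in Eq.~(\ref{config:300}) --- $r_{\bm{n}\mid\bm{l}}=\int m(\bm{n};n,\bm{y}){\cal D}_{\bm{l}}(\bm{y})\,d\bm{y}$, the analogous identity for $\alpha_{ij}(n,n_i,n_j\mid k,k_i+1,k_j-1)$, and $p_{nk}(c)=\int_0^1\binom{n-k}{c-1}z^{c-1}(1-z)^{n-c-k+1}(k-1)(1-z)^{k-2}dz$ --- so that the sampling formula is exhibited as $\int m(\bm{n};n,\bm{y})f_n(\bm{x},\bm{y};t,\theta)\,d\bm{y}+o(\theta)$ and the density is read off without any local limit theorem; this is why the paper can say the proof ``follows immediately.'' Your route buys a genuine pointwise limit statement at the cost of asymptotic bookkeeping you have not fully executed; the paper's route buys brevity and exactness in $n$ for the combinatorial part, with the condition $\theta\log n\to0$ used only to show the retained order-$\theta$ sums vanish, via $\frac{\theta}{2}\sum_k kq_{nl}(k;t)\le\theta\sum_{k=2}^n\frac{1}{k-1}$. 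On the error control, your appeal to $Q_1$ mirrors the paper's own discussion preceding the theorem, though your claim that $Q_1=o(\theta)$ ``precisely when'' $\theta\log n\to0$ is optimistic (since $Q_1\asymp\tfrac12(\theta\log n)^2$, $o(\theta)$ strictly needs $\theta(\log n)^2\to0$); this looseness is inherited from the paper's informal statement rather than introduced by you, but it is worth being aware that the two ways of justifying the hypothesis $\theta\log n\to0$ are not identical.
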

\begin{proof}
The proof follows immediately by taking $n \to \infty$ such that $\bm{n}/n \to \bm{y}$ in $p(\bm{n};t,\bm{x})$ of Theorem~\ref{theorem:1}. 
The transition density for $\bm{X}(t)$ can include corners of the space where some variables are zero.
Inversion of the sampling distributions in the expansion Eq.~(\ref{config:3a}) follows from;
\begin{eqnarray}
r_{\bm{n}\mid\bm{l}} &=& \int_{\Delta_d} m(\bm{n};n,\bm{y}){\cal D}_{\bm{l}}(\bm{y}) d\bm{y};
\nonumber \\
\alpha_{ij}(n,n_i,n_j | k,k_i + 1,k_j -1) &=& {\cal DM}_{(k\backslash ij,k_i+1,k_j-1),n-k}\big (n\backslash ij -k\backslash ij ,n_i-k_i-1,n_j-k_j+1\big )
\nonumber \\
&=& \int_{0 < y_i+y_j < 1}
\frac{n!}{(n\backslash ij)!n_i!n_j!}y_i^{n_i}y_j^{n_j}(1-y_i-y_j)^{n\backslash ij}
\nonumber \\
&&\times \frac{\Gamma (k)}{\Gamma (k\backslash ij)\Gamma (k_i+1) \Gamma (k_j+1)} 
y_i^{k_i}y_j^{k_j-2}(1-y_i-y_j)^{k\backslash ij-1}dy_idy_j;
\nonumber \\
&=& \int_{0 < y_i+y_j < 1}
m((n\backslash ij, n_i, n_j); n, (1 - y_i - y_j, y_i, y_j)) 
\nonumber \\
&& \qquad \times {\cal D}_{ (k\backslash ij,k_i+1,k_j-1)}(1-y_i-y_j,y_i,y_j)dy_idy_j;
\nonumber \\
p_{nk}(c) &=&
\int_0^1{n-k\choose c-1}z^{c-1}(1-z)^{n-c-k+1}\times (k-1)(1-z)^{k-2}dz.
\end{eqnarray}
The generalised Dirichlet distributions ${\cal D}_{\bm{l}}(\bm{y})$, ${\cal D}_{\bm{l}\backslash ij}(\bm{y}\backslash ij)$ have the property that if variables $\bm{y}$ or $\bm{y}\backslash ij$ are  zero, the resulting  Dirichlet distribution is of lower dimension than $d$.
The transition density can include corners of the space where some of the variables are zero if some entries in $\bm{x}$ are zero.

The condition that $\theta\log n \to 0$ is needed to ensure that as $\theta \to 0$, $n \to \infty$ then $\frac{\theta}{2}\sum_{k=l}^nkq_{nl}(k;t) \to 0$ as well as other similar terms in (\ref{config:300}). This is so because
\[
\frac{\theta}{2}\sum_{k=l}^nkq_{nl}(k;t)\leq \frac{\theta}{2}\sum_{k=2}^nk\mathbb{E}\big [T_k\big ] = \theta\sum_{k=2}^n\frac{1}{k-1} \to 0.
\]
\end{proof}

In \citet{BT2016,BG2019} the way a stationary distribution is developed is by finding that if $\theta$ is small, then the population is approximately fixed or contains just two allele types $a$, $b$ with relative frequencies on a line $y_a+y_b=1$.
This is a boundary approximation.
The transition density (\ref{config:300}) when there are two types is 
\begin{equation}
\frac{\theta}{2}\sum_{k=2}^\infty k(k-1)\int_0^tq_{\infty 2}(k;w)\Big ((1-y_b)^{k-2}(\bm{x}e^{\gamma (t-w)})_a
P_{ab} 
+ a\leftrightarrow b
\Big )dw.
\label{twotypes:0}
\end{equation}
The sum in (\ref{twotypes:0}) converges replacing $n$ by $\infty$ from (\ref{config:300}) because for $0 < y_b < 1$ (considering the first term)
\begin{eqnarray*}
&&\sum_{k=2}^\infty k(k-1)(1-y_b)^{k-2}\int_0^tq_{\infty 2}(k;w)(\bm{x}e^{\gamma (t-w)})_bdw
\\
&&~\leq \sum_{k=2}^\infty k(k-1)(1-y_b)^{k-2}\int_0^tq_{\infty 2}(k;w)dw
\\
&&~= \sum_{k=2}^\infty k(k-1)(1-y_b)^{k-2}q_{\infty 1}(k;t)
\\
&&~\leq \sum_{k=2}^\infty k(k-1)(1-y_b)^{k-2}\mathbb{E}\big [T_k\big]
= 2y_b^{-1} < \infty.
\end{eqnarray*}
The limit stationary distribution in (\ref{twotypes:0}) as $t \to \infty$ is
\[
\theta \big (\pi_aP_{ab}y_b^{-1} + \pi_bP_{ba}y_a^{-1}\big ),
\]
on the line $y_a+y_b=1$, as in \citet{BT2016,BG2019},
because $(e^{\gamma (t-w)})_a \to \pi_a$ and 
$\int_0^\infty q_{\infty 2}(k;w)dw = \mathbb{E}\big [T_k\big ] = {k\choose 2}^{-1}$.

\section{Discussion}
There is a natural curiosity in extending the stationary sampling distribution with small mutation rates (\ref{thm:c}), derived in \citet{BG2019}, to the sampling distribution taken at time $t$ using coalescent methods. An important point is that because of the duality of the coalescent process with the Wright-Fisher diffusion this is the same as the sampling distribution at time $t$ in a Wright-Fisher diffusion. 
In this paper we find expressions for the sampling distribution in a sample of $n$ genes taken at time $t$ from a Wright-Fisher population which follows a diffusion model. The overall mutation rate $\theta$ is taken to be small and mutation rates between types are general. An extension to gene trees is described. The idea in obtaining formulae is simple and is based on the result that to order $\theta$ it is only necessary to consider at most one mutation in sample lineages up to coalescence. A sample then consists of families of genes from founder lineages which may be of different types and at most one family of types descendent from a mutation in the coalescent tree. An illustration of these gene families is in Fig. \ref{fig:lineages}.
 There is a combinatorial complexity in Theorem 1 because of the different possible types at time zero. If the population at time zero consists of just one type, expressions for the sampling probability are much simpler as shown in Corollary 3.
 In a sample size of two the concepts are much easier, so that case is worked through first.  A simulation for sample size two shows the approximation to be very accurate for small $\theta$.

Many authors have observed that  samples at Single Nucleotide Polymorphism (SNP) sites with small mutation rates are either fixed, or contain bi-allelic states. If a site is fixed initially then explicit calculations in Corollary \ref{corollary:x} and Corollary \ref{corollary:4} model this feature of bi-allelic states.
An application of the theory in this paper is to SNP positions in DNA sequences where mutation rates are small.  Because the rate matrix $\gamma$ is constrained only 
by the requirement that its off-diagonal elements be ${\mathcal O}(\theta)$, the theory can accommodate any model class of nucleotide substitution rate matrices 
\citep[see, for instance][Table~1.1]{Yang06}.  

The frequency spectrum in DNA sequence data is often of interest. As an example we consider a frequency spectrum in which  
the ancestral bases are fixed at each individual SNP site, with relative proportions $z_A,z_T,z_C,z_G$ along the sequence sites. If $t << 1/\theta$ then Corollary \ref{corollary:4} applies at each SNP site and Eq. (\ref{config:00a}) shows that the probability of seeing $n_A$ entries $A$ at a site and $n_T$ entries $T$ when the ancestor type is $A$ is, to $\mathcal{O}(\theta^2)$,
\[
\gamma_{AT}\sum_{l=1}^n\sum_{k\geq l,2}^nkq_{nl}(k;t)p_{nk}(n_T),
\]
with similar expressions for other pairs. The expected proportion of sites which are SNPs for which there are $n_A$ entries $A$ and $n_T$ entries $T$ at the site 
when the ancestor base is not known, denoted by $f(n_A,n_T)$, is then
\[
f(n_A,n_T)=\sum_{l=1}^n\sum_{k\geq l,2}^nkq_{nl}(k;t)\Bigg (z_A\gamma_{AT}p_{nk}(n_T)+z_T\gamma_{TA}p_{nk}(n_A)\Bigg ).
\]
In a Bayesian approach a prior distribution could be chosen for $z_A,z_T,z_C,z_G$.
The derivation of $f(n_A,n_T)$ does not need independence between sites, which may be needed for further inference. If independence of SNP sites is assumed, then the likelihood of observing SNP site counts of $r(n_A,n_T)$ with $n_A$ entries $A$ and $n_T$ entries $T$ is $f(n_A,n_T)^{r(n_A,n_T)}$. Using a similar notation for the other base pairs the likelihood from segregating sites in the full data is, up to a multinomial coefficient, 
\begin{eqnarray*}
&&f(n_A,n_T)^{r(n_A,n_T)}
f(n_A,n_C)^{r(n_A,n_C)}
f(n_A,n_G)^{r(n_A,n_G)}
\nonumber \\
&&\times f(n_T,n_C)^{r(n_T,n_C)}
f(n_T,n_G)^{r(n_T,n_G)}
f(n_C,n_G)^{r(n_C,n_G)}.
\end{eqnarray*}
There is information in the non-segregating sites as well,  which could be included in the likelihood. This is a sketch of a likelihood application. Full details are beyond the scope of this paper.  Rate matrix estimation from site frequency data assuming stationarity is described in some detail in \cite{BT2017}.

There are several reasons to consider large sample sizes by taking $n\to \infty$. The first is that current sample sizes can be large. The second is that the approximate sampling distributions are obtained in probability to $o(\theta)$ for fixed sample size $n$, and the error can possibly be large when $n \to \infty$.
We show from (\ref{maxbound:0}) that when $n \to \infty$ this order of approximation can still be small if $n \to \infty$ and $\theta \to 0$ such that $n\theta \to \alpha$,  where $\alpha << 1$. The third reason is that ideally we would like to use the coalescent approach to find an approximate transition density in the Wright-Fisher diffusion for small mutation rates. The approach taken in a sample cannot be used directly in an infinite-leaf coalescent tree because as $\theta \to 0$ the probability of greater than one mutation in the coalescent tree is 1, not $\mathcal{O}(\theta^2)$. Nevertheless we can think of a large sample size model as approximating a discrete Wright-Fisher model with an effective population size of $N$. Then it is appropriate to consider sampling formulae when $n\to \infty$ and $\theta \to 0$ such that $n\theta \to \alpha << 1$  by thinking of $N=n$.
 $\alpha << 1$ is likely to be satisfied in applications, thinking of this as a population limit. Mutation rates at neutral genomic sites are typically less than  $10^{-7}$ per generation per base, effective population sizes $N \sim \mathcal{O}(10^4)$, so $\theta$ is typically $\mathcal{O}(10^{-3})$, and then $\alpha$ is typically less than $0.01$, which gives  the asymptotic probability of more than one mutation in an infinite coalescent tree limit, Eq.~(\ref{Q1forAlpha}), as 
 $Q_1 \sim  1 - e^{-\alpha}(1 + \alpha) < 0.00005$.

\section{Acknowledgement} We thank two referees for a careful reading of the manuscript and their comments and suggestions which have improved the paper.

\end{document}